\date{2018-05-27}
\title[Syntactic Conditions for Antichain Property in \cp]
  {Syntactic Conditions for Antichain Property \\ in \cplong}
\author{\vp \\ \rice}
\begin{document}

\maketitle

%%%%%%%%%%%%%%%%%%%%%%%%%%%%%%%%%%%%%%%%%%%%%%%%%%%%%%%%%%%%%%%%%%%%%%%%%%%%%%%%

\begin{abstract}
  We study syntactic conditions which guarantee when
  a \cp{} (\cplong) program has antichain property:
  no answer set is a proper subset of another.
  A notable such condition is that the program's dependency graph being acyclic
  and having no directed path from one cr-rule head literal to another.
\end{abstract}

\begin{keywords}
  logic programming, answer set, dependency graph, proof of literal
\end{keywords}

% \tableofcontents \newpage

%%%%%%%%%%%%%%%%%%%%%%%%%%%%%%%%%%%%%%%%%%%%%%%%%%%%%%%%%%%%%%%%%%%%%%%%%%%%%%%%

\section{Introduction}

\ap{} (\aplong) is a programming language for knowledge representation
  and reasoning \cite{b_aprolog}.
  An \ap{} program comprises rules which determine
  the sets of beliefs that a logical agent can hold.
  \ap{} relies on the stable model semantics of logic programs with negation.

\ap{} has been applied to solve problems in various fields
  \cite{b_applications}.
  For instance, a logic program was used to guide
  multiple robots to collaboratively tidy up a house.
  Also, a tourism application suggested trips based on user preferences.

\cp{} (\cplong) extends \ap{} with cr-rules \cite{b_crprolog}.
  Cr-rules apply only when regular rules alone would result in contradiction.
  Cr-rules are meant to represent rare exceptions.

\cp{} has also been utilized in several applications.
  For instance, \cp{} enables the space shuttle decision support system
  USA-Smart to find the most reasonable plans,
  even in the unlikely case of critical failures \cite{b_usa_smart}.
  Another application of \cp{} is a formal encoding of negotiation,
  which is a multi-agent planning problem with incomplete information
  and dynamic goals \cite{b_negotiation}.
  Also, \cp{} is used as the back-end of the high-level domain representation
  of an architecture for knowledge representation and reasoning in robotics
  \cite{b_architecture}.
  Yet one more application of \cp{} is the AIA architecture
  for intentional agents who observe and response to changing environments
  \cite{b_intention}.

The first \cp{} inference engine is CR-MODELS,
  which was introduced in \citeN{b_crmodels}.
  Its efficiency is sufficient for medium-size programs,
  including an application developed for NASA.
  The second \cp{} implementation is \sparc, introduced in \citeN{b_sparc}.
  It implements a type system for the language using sort definitions.

In this paper, we investigate the antichain property
  that a logic program might have: no answer set is a proper subset of another.
  Intuitively, a program is a specification for answer sets,
  which contain literals corresponding to beliefs to be held
  by an intelligent agent \cite[pages 32-33]{b_kr}.
  The formation of these answer sets adheres to some guidelines,
  including the rationality principle which tells reasoners to believe nothing
  they are not forced to believe.
  According to this principle, the antichain property is desirable:
  no logic program $\p$ should have a chain of answer sets
  $\lits_1 \setsp \lits_2 \setsp \ldots$.
  If holding just the beliefs in $\lits_1$ suffices to satisfy the specification
  $\p$, then a reasoner should believe nothing in $\lits_2 \setd \lits_1$.

All \ap{} programs have antichain property, but some \cp{} programs do not.
  We look at syntactic conditions guaranteeing that a \cp{} program
  has this desired semantic property.
  A notable such condition -- the primary achievement of this paper --
  is when the program's dependency graph is acyclic
  and has no directed path from one cr-rule head literal to another
  (Theorem \ref{th_last}).
  % Such conditions are specified in Theorem \ref{th_last} as well as
  % % Corollary \ref{cor_one_cr_rule} and
  % Proposition \ref{pr_unique_cr_literals}.
  We will revisit a few known results in Section \ref{s_preliminaries}
  and prove some new results in Section \ref{s_results}.

%%%%%%%%%%%%%%%%%%%%%%%%%%%%%%%%%%%%%%%%%%%%%%%%%%%%%%%%%%%%%%%%%%%%%%%%%%%%%%%%

\section{Preliminaries}
\label{s_preliminaries}

The complete specifications of \ap{} and \cp{}
  can be found in \citeN[Sections 2.1 \& 2.2 \& 5.5]{b_kr}.
  We also borrow some definitions from \citeN[Sections 1 \& 2]{b_hcf}.
  In this paper, we only consider finite ground \cp{} programs
  whose abductive supports are minimal wrt (with respect to) cardinality.

%%%%%%%%%%%%%%%%%%%%%%%%%%%%%%%%%%%%%%%%%%%%%%%%%%%%%%%%%%%%%%%%%%%%%%%%%%%%%%%%
\subsection{Syntax}

% This subsection describes how to write \cp{} programs.

\begin{definitions}
  An \textdef{atom} represents a boolean value.
  A \textdef{literal} is either an atom $a$ or its \textdef{\cn} $\neg a$
  (also called \emph{strong negation}).
  An \textdef{extended literal} is either a literal $\lit$
  or its \textdef{\dn} $\keynot \lit$ (also called \emph{negation as failure}).
  Literal $\lit$ \textdef{appears positive} in extended literal $\lit$
  and \textdef{appears negative} in extended literal $\keynot \lit$.
\end{definitions}

\begin{definitions}%[Regular Rule]
  A \textdef{regular rule} has the form:
  \begin{gather*}
    \regRule \tag{$\rl$}
  \end{gather*}
  Each $\lit_i$ above is a literal.
  We assume $1 \le k \le m \le n$.%
  \footnote{Sometimes, $k = 0$ is allowed,
    and $\rl$ becomes a \emph{constraint}.
    But constraints can be equivalently translated to rules with $k > 0$.
    So this paper ignores constraints for simplicity.}
  When $k = m = n$, we call $\rl$ a \textdef{fact}.
  % Also, we write: $$\disj.$$
\end{definitions}

\begin{definitions}
  The \textdef{head} of a rule is the set of literals (disjuncts)
  before the arrow $\when$.
  For instance, $\head\rl = \set{\lit_1, \ldots, \lit_k}$.
  If $\rls$ is a set of rules, $\head\rls = \bigcup_{\rl \in \rls} \head\rl$.
  If $k = 1$, $\rl$ is \textdef{nondisjunctive}.
  A set $\rls$ of rules is nondisjunctive if so are all rules in $\rls$.
\end{definitions}

\begin{definitions}
  The \textdef{body} of a rule comprises the extended literals after $\when$
  (\textdef{premises} of the rule).
  The \textdef{positive body} of a rule is the set of literals
  that appear positive in the body of the rule.
  For instance, $\posbody\rl = \set{\lit_{k + 1}, \ldots, \lit_m}$.
  If $m = n$, the rule is \textdef{\dnf}.
  A set $\rls$ of rules is \dnf{} if so are all rules in $\rls$.
\end{definitions}

\begin{definitions}%[CR-Rule]
  Similar to a regular rule, a \textdef{cr-rule} (consistency restoring rule)%
  \footnote{Cr-rules apply only when it would be inconsistent otherwise
    (more details in the following semantics subsection).}
  has the form: $$\crRule$$
  We call $\lit_0$ a \textdef{cr-literal}.
\end{definitions}

\begin{definitions}%[\ap{} Program]
  An \textdef{\ap{} program} is a finite set of regular rules.
\end{definitions}

\begin{definitions}%[\cp{} Program]
  A \textdef{\cp{} program} $\p$ is a finite set of regular rules and cr-rules.
  The \textdef{regular subprogram} $\preg$ comprises the regular rules in $\p$.
  The \textdef{cr-subprogram} $\pcr$ comprises the cr-rules in $\p$.
\end{definitions}

\begin{definitions}%[Applications of CR-Rules]
  The \textdef{application} $\appl\rl$ of a cr-rule $\rl$ is the regular rule
  obtained from $\rl$ by replacing $\whencr$ with $\when$.
  If $\rls$ is a set of cr-rules, $\appl\rls = \set{\appl\rl : \rl \in \rls}$.
  % When $\rls \sets \pcr$
  % (meaning $\rls$ is a subset of cr-rules in a \cp{} program $\p$),
  % define $\pappld\rls$.
\end{definitions}

%%%%%%%%%%%%%%%%%%%%%%%%%%%%%%%%%%%%%%%%%%%%%%%%%%%%%%%%%%%%%%%%%%%%%%%%%%%%%%%%
\subsection{Semantics}

We now look into the formal definitions of answer sets
  and the antichain property.
  But first, a \textdef{context} is a subset of literals in a \cp{} program.
  % Note: for simplicity, we have slightly altered the original definition
  % of a \textdef{context} in \citeN[page 55]{b_hcf}.
  Two literals are \textdef{complementary} if one is the \cn{} of the other.
  A context is \textdef{consistent}
  if it contains no pair of complementary literals.

\begin{convention}[Consistent Contexts]
  For simplicity, this paper assumes all contexts (mentioned in results)
  are consistent.
\end{convention}

Now, a context $\lits$ \textdef{satisfies}:
  \begin{enumerate}
    \item a literal $\lit$ if $\lit \in \lits$
    \item an extended literal $\keynot \lit$ if $\lit \notin \lits$
    \item a regular rule head $\disj$ if some $\lit_i \in \lits$
    \item a regular rule body $\conj$ if $\lits$ satisfies all extended literals
      $\lit_{k + 1}, \ldots, \keynot \lit_n$
      (we say this rule \textdef{fires} wrt $\lits$ in case of satisfaction)
    \item a regular rule $\rl$ if $\lits$ satisfies the head of $\rl$
      whenever $\lits$ satisfies the body of $\rl$
    \item an \ap{} program $\p$ if $\lits$ satisfies every rule in $\p$
  \end{enumerate}

\begin{definitions}%[Supported Literal]
  Also, a literal $\lit$ is \textdef{supported} by a regular rule $\rl$
  wrt a context $\lits$ if $\rl$ fires wrt $\lits$
  and $\head\rl \seti \lits = \set\lit$.
  % \cite[Proposition 2.2.1, page 43]{b_kr}
\end{definitions}

Next, whether a context $\lits$ is an answer set
  of an \ap{} program $\p$ is defined in two steps.
  \begin{itemize}
    \item Case $\p$ is \dnf.
      \begin{definitions}%[\ap{} Answer Set, part 1 of 2]
        Then $\lits$ is an \textdef{answer set} of $\p$ if:
        $\lits$ satisfies $\p$, and $\lits$ is minimal wrt set inclusion
        (no proper subset of $\lits$ satisfies $\p$).
      \end{definitions}
    \item Case $\p$ is general.
      \begin{definitions}%[\ap{} Answer Set, part 2 of 2]
        The \textdef{reduct} $\reduct{\p}{\lits}$ is the \dnf{} program
        obtained from $\p$ by:
        \begin{itemize}
          \item removing all rules containing $\keynot \lit$ where literal
            $\lit \in \lits$ (since these rules do not fire wrt $\lits$), then
          \item from each remaining rule: deleting every extended literal
            containing $\keynot \lit$ (as $\lit \notin \lits$ now,
            so $\keynot \lit$ is satisfied and can be dropped
            from the premises of the rule)
        \end{itemize}
        We say $\lits$ is an \textdef{answer set} of $\p$ if
        $\lits$ is an answer set of $\reduct{\p}{\lits}$.
        When $\p$ has some answer set, we call $\p$ \textdef{consistent}.
      \end{definitions}
  \end{itemize}

Next, we define answer sets of a \cp{} program $\p$.
  \begin{itemize}
    \item First, let $\rls \sets \pcr$
      (meaning $\rls$ is a subset of cr-rules in $\p$).
      \begin{definitions}%[Abductive Support]
        Then $\rls$ is an \textdef{abductive support} of $\p$ if:
        \begin{itemize}
          \item the \ap{} program $\applp{\rls}$ is consistent, and
          \item $\rls$ is minimal wrt cardinality:
            no $\rls' \sets \pcr$ exists where $\card{\rls'} < \card{\rls}$
            such that $\applp{\rls'}$ is consistent
        \end{itemize}
      \end{definitions}
    \item Then a context $\lits$ is an \textdef{answer set} of $\p$ if
      $\lits$ is an answer set of $\applp{\rls}$
      for some abductive support $\rls$ of $\p$.
  \end{itemize}

\begin{example}[Answer Sets of a \cp{} Program]
  We encode a hypothetical complexity result using the solver \sparc{}%
  \footnote{\githubSparc}
  \cite{b_sparc}:
  \framed{\lstinputlisting[firstline=7, lastline=100]{cp.sp}}
  \sparc{} returns exactly two answer sets
  (the cr-rule must apply to make the program consistent):
  \framed{\lstinputlisting[firstline=3, lastline=100]{cp.txt}}
\end{example}

We continue with a few more definitions.
  \cp{} programs $\p_1$ and $\p_2$ are \textdef{equivalent} when
  $\lits$ is an answer set of $\p_1$ iff
  $\lits$ is an answer set of $\p_2$ (for every context $\lits$).
  Finally, a \cp{} program $\p$ has \textdef{antichain property} if:
  for all answer sets $\lits_1$ and $\lits_2$ of $\p$,
  we have $\lits_1 \sets \lits_2 \impl \lits_1 = \lits_2$.
  Some \cp{} programs do not have this property.

\begin{example}[A \cp{} Program Without Antichain Property]
  Consider the following program $\p$:
  \begin{gather*}
    a \when. \\
    \neg a \when \keynot b, \keynot c. \\
    b \when c. \tag{$\rl_0$} \\
    b \whencr. \tag{$\rl_1$} \\
    c \whencr. \tag{$\rl_2$}
  \end{gather*}
  Observe $\p$ has an answer set chain
  $\lits_1 = \set{a, b} \setsp \set{a, b, c} = \lits_2$.
  (The corresponding abductive supports are $\rls_1 = \set{\rl_1}$
  and $\rls_2 = \set{\rl_2}$.)
  Intuitively, the answer set chain is induced by the \quote{dependence}
  of cr-literal $b$ (from rule $\rl_1$) on cr-literal $c$ (from rule $\rl_2$)
  in rule $\rl_0$ (\quote{$b \when c.$}).
  We will show that cr-independence guarantees antichain property,
  at least for acyclic programs such as $\p$, in Theorem \ref{th_last}.
  The terms \emph{cr-independence} and \emph{acyclicity}
  will be formally defined in Subsection \ref{ss_graph}.
\end{example}

%%%%%%%%%%%%%%%%%%%%%%%%%%%%%%%%%%%%%%%%%%%%%%%%%%%%%%%%%%%%%%%%%%%%%%%%%%%%%%%%
\subsection{Antichain \ap}

Every \ap{} program is known to have antichain property;
  but for completeness, we will still provide a direct proof
  by \citeN{b_personal}.%
  \footnote{We thank the third referee for pointing out that this result
    can also be obtained from \citeN[Lemmas 1 \& 2 \& 3]{b_equivalence}.}

\begin{lemma}[Reduct Inclusion]
\label{l_reduct_inclusion}
  Let $\p$ be an \ap{} program and $\lits_1$ \& $\lits_2$ be contexts.
  If $\lits_1 \sets \lits_2$, then
  $\reduct{\p}{\lits_2} \sets \reduct{\p}{\lits_1}$.
\end{lemma}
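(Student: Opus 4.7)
The plan is to show that every rule appearing in $\reduct{\p}{\lits_2}$ also appears in $\reduct{\p}{\lits_1}$ by tracing both sides back to the same source rule in $\p$ and comparing how the two-step reduct construction treats it.

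First, I would fix an arbitrary $\rl' \in \reduct{\p}{\lits_2}$ and let $\rl \in \p$ be the source rule from which $\rl'$ is produced. Because $\rl$ survived step one of the construction of $\reduct{\p}{\lits_2}$, there is no premise of the form $\keynot \lit$ in $\rl$ with $\lit \in \lits_2$. Using the hypothesis $\lits_1 \sets \lits_2$, any $\lit \in \lits_1$ is also in $\lits_2$, so $\rl$ contains no premise $\keynot \lit$ with $\lit \in \lits_1$ either, meaning $\rl$ also survives step one of the construction of $\reduct{\p}{\lits_1}$.

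Next, I would observe that step two deletes \emph{every} remaining $\keynot \lit$ premise from a surviving rule (since at that point all such $\lit$ lie outside the given context by step one). Hence both $\reduct{\p}{\lits_2}$ and $\reduct{\p}{\lits_1}$ transform $\rl$ into the same \dnf{} rule, namely $\rl$ with all negation-as-failure premises stripped off. This is exactly $\rl'$, so $\rl' \in \reduct{\p}{\lits_1}$, completing the inclusion.

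There is no real obstacle here: the argument is almost purely unpacking the definition of reduct. The only conceptual point worth highlighting is the monotonicity direction --- a smaller context makes step one \emph{more} permissive (fewer rules deleted) while step two erases all surviving $\keynot$-premises regardless of the context, so shrinking $\lits$ can only add rules to the reduct, never remove or alter them.
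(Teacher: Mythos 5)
Your proposal is correct and follows essentially the same route as the paper's proof: take a rule surviving the reduct wrt $\lits_2$, note that $\lits_1 \sets \lits_2$ forces each negated premise literal to lie outside $\lits_1$ as well, and conclude the same stripped \dnf{} rule appears in $\reduct{\p}{\lits_1}$. Your extra remark that step two acts identically on the surviving rule regardless of the context is a slightly more explicit rendering of what the paper leaves implicit, but the argument is the same.
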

\begin{proof}
  Assume $\reduct{\p}{\lits_2}$ has an arbitrary \dnf{} rule $\rl$:
  $$\regRuleNoDefNeg$$
  The corresponding rule in $\p$ is: $$\regRule$$
  For each $i$ in $\set{m + 1, \ldots, n}$,
  we know $\lit_i \notin \lits_2$,
  so $\lit_i \notin \lits_1$ (as $\lits_1 \sets \lits_2$).
  Therefore, $\rl$ is also a rule in $\reduct{\p}{\lits_1}$.
\end{proof}

\begin{proposition}[Antichain Property of \ap{} Programs]
\label{pr_antichain_a_prolog}
  Let $\p$ be an \ap{} program and
  $\lits_1 \sets \lits_2$ be answer sets of $\p$.
  Then $\lits_1 = \lits_2$.
\end{proposition}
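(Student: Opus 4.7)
The plan is to chain together the Reduct Inclusion Lemma with the minimality clause in the definition of an answer set for a \dnf{} program. Assume $\lits_1 \sets \lits_2$ are both answer sets of $\p$; the goal is to squeeze $\lits_1$ against $\lits_2$ via a minimality argument on a common reduct.

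First, I would invoke Lemma \ref{l_reduct_inclusion} on the hypothesis $\lits_1 \sets \lits_2$ to obtain the reverse containment of reducts $\reduct{\p}{\lits_2} \sets \reduct{\p}{\lits_1}$. Next, since $\lits_1$ is an answer set of $\p$, unfolding the definition means $\lits_1$ is a (minimal) model of the \dnf{} program $\reduct{\p}{\lits_1}$; in particular, $\lits_1$ satisfies every rule of $\reduct{\p}{\lits_1}$. Because $\reduct{\p}{\lits_2}$ is a subset of $\reduct{\p}{\lits_1}$, the context $\lits_1$ therefore also satisfies the smaller program $\reduct{\p}{\lits_2}$.

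Finally, I would pivot to the minimality of $\lits_2$ as an answer set: $\lits_2$ is by definition a minimal satisfying context of the \dnf{} program $\reduct{\p}{\lits_2}$. We now have a satisfying context $\lits_1$ of $\reduct{\p}{\lits_2}$ with $\lits_1 \sets \lits_2$, so minimality of $\lits_2$ forces $\lits_1 = \lits_2$, completing the proof.

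I do not anticipate a real obstacle here; the only subtle point is to remember that ``answer set'' for a \dnf{} program means minimal satisfying context (not merely supported or model-like), which is exactly the clause that powers the final step. The Reduct Inclusion Lemma is what lets us transfer $\lits_1$'s satisfaction across the two different reducts so that the minimality of $\lits_2$ can be applied.
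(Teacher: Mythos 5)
Your proposal is correct and matches the paper's own proof essentially step for step: apply Lemma \ref{l_reduct_inclusion} to get $\reduct{\p}{\lits_2} \sets \reduct{\p}{\lits_1}$, transfer satisfaction of the larger reduct by $\lits_1$ to the smaller one, and then invoke the minimality of $\lits_2$ as an answer set of $\reduct{\p}{\lits_2}$ to conclude $\lits_1 = \lits_2$. No gaps; this is the same argument the paper attributes to its source.
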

\begin{proof}
  Let the reducts $\p_1 = \reduct{\p}{\lits_1}$
  and $\p_2 = \reduct{\p}{\lits_2}$.
  Notice $\lits_1$ and $\lits_2$ are respectively answer sets
  of $\p_1$ and $\p_2$.
  By Lemma \ref{l_reduct_inclusion}, $\p_2 \sets \p_1$.
  Then because $\lits_1$ satisfies $\p_1$,
  we know $\lits_1$ also satisfies $\p_2$.
  Now, being an answer set, $\lits_2$ minimally satisfies $\p_2$.
  So $\lits_2 \sets \lits_1$.
  Since $\lits_1 \sets \lits_2$ (hypothesis), we have $\lits_1 = \lits_2$.
\end{proof}

%NOTE too-strong syntactic condition: 1 cr-rule
% \begin{corollary}
% \label{cor_one_cr_rule}
%   Let $\p$ be a \cp{} program where $\preg$ is inconsistent.
%   If $\p$ has only one cr-rule, then $\p$ has antichain property.
% \end{corollary}
% \begin{proof}
%   Assume $\pcr = \set\rl$ for some cr-rule $\rl$.
%   Then $\p$ is equivalent to the \ap{} program $\pappld{\set\rl}$.
%   Simply apply Proposition \ref{pr_antichain_a_prolog} to $\pappl{\set\rl}$.
% \end{proof}

%%%%%%%%%%%%%%%%%%%%%%%%%%%%%%%%%%%%%%%%%%%%%%%%%%%%%%%%%%%%%%%%%%%%%%%%%%%%%%%%

\section{Results}
\label{s_results}

We will proceed with the main contributions of this paper.
  Let us start by reviewing some concepts involving dependency graphs
  of logic programs \cite{b_hcf}.

%%%%%%%%%%%%%%%%%%%%%%%%%%%%%%%%%%%%%%%%%%%%%%%%%%%%%%%%%%%%%%%%%%%%%%%%%%%%%%%%
\subsection{Dependency Graphs}
\label{ss_graph}

\begin{definitions}%[Dependency Graph]
  In the \textdef{dependency graph} $\g\p$ of a \cp{} program $\p$:
  every vertex is a literal in $\p$, and a directed edge
  to vertex $\lit_1$ from vertex $\lit_2$ exists iff $\p$ has some rule $\rl$
  where literals $\lit_1 \in \head\rl$ and $\lit_2 \in \posbody\rl$.
  We say $\p$ is \textdef{acyclic} if $\g\p$ contains no directed cycle.
  % (A self-loop in $\g\p$ forms a directed cycle; see Note \ref{n_self_loop}.)
\end{definitions}

\begin{remark}[Answer Set of Acyclic \ap{} Program]
\label{r_acyclic_answer_set}
  Let $\p$ be an acyclic%
  % \footnote{Whenever we use graph-theoretic terminology (such as acyclicity)
  %   to describe a \cp{} program $\p$,
  %   we are actually referring to its dependency graph $\g\p$.}
 \ap{} program and $\lits$ be a context.
  Then $\lits$ is an answer set of $\p$ iff: $\lits$ satisfies $\p$, and
  every literal in $\lits$ is supported by a rule in $\p$ wrt $\lits$
  \cite[Theorem 2.7, page 58]{b_hcf}.
\end{remark}

% The aforementioned dependency graph definition has a minor technicality
%   that may be interesting to look at.
%
% \begin{note}[Self-Loop Forming Directed Cycle in Dependency Graph]
% \label{n_self_loop}
%   To the contrary, suppose a self-loop at a literal $\lit$
%   does not form a cycle.
%   Consider this supposedly-acyclic \ap{} program containing a sole rule $\rl$:
%   $$\lit \when \lit.$$
%   Let context $\lits = \set{\lit}$.
%   Notice that: $\lits$ satisfies $\p$,
%   and $\lit$ is supported by $\rl$ wrt $\lits$.
%   However, $\lits$ is not an answer set of $\p$, as $\lits$ is not minimal
%   (the only answer set of $\p$ is $\sete$).
%   Then Remark \ref{r_acyclic_answer_set} is wrong, contradiction.
% \end{note}

% We wrap up this subsection with a few simple definitions.

\begin{definitions}%[Head-Cycle]
  Now, a \textdef{\hc} in the dependency graph $\g\p$ of a \cp{} program $\p$
  is a directed cycle $\cy$ containing vertices $\lit_1 \ne \lit_2$
  such that there is a rule $\rl \in \p$
  where literals $\lit_1, \lit_2 \in \head\rl$ \cite[page 56]{b_hcf}.
  We say $\p$ is \textdef{\hcf} if $\g\p$ contains no \hc.
\end{definitions}
The class of \hcf{} programs has several convenient properties
  that we will make use of later.

\begin{definitions}%[Dependent Literals]
  Also, literal $\lit_1$ \textdef{depends} on literal $\lit_2$
  in a \cp{} program $\p$ if
  the dependency graph $\g\p$ has a directed path to $\lit_1$ from $\lit_2$.
\end{definitions}
The following definition formalizes an important syntactic indicator
  of antichain property.

\begin{definition}[CR-Independence]
\label{d_cr_independence}
  A \cp{} program $\p$ is called \textdef{cr-independent} if
  $\lit_1$ does not depend on $\lit_2$ for
  all cr-literals $\lit_1$ and $\lit_2$ in $\p$.
\end{definition}

%%%%%%%%%%%%%%%%%%%%%%%%%%%%%%%%%%%%%%%%%%%%%%%%%%%%%%%%%%%%%%%%%%%%%%%%%%%%%%%%
\subsection{Abductive Supports}

We continue with some technical lemmas related to abductive supports in \cp.
  Surprisingly, some of the following formal proofs are quite involved
  for their intuitive claims.

%NOTE unneeded
%
% \begin{remark}
%   Let $\p$ be a \cp{} program with some answer set $\lits$.
%   Then $\lits \sets \head\p$.
% \end{remark}
%
% \begin{remark}
% \label{r_empty_reduct_unless_fires}
%   Let $\rl$ be a regular rule and $\lits$ be a context.
%   If $\rl$ fires wrt $\lits$,
%   then the reduct $\reduct{\set\rl}\lits \ne \sete$.
% \end{remark}
%
% \begin{remark}
%   Let $\p$ be a \cp{} program with some answer set $\lits$ and
%   a corresponding abductive support $\rls$.
%   Let cr-rule $\rl \in \rls$.
%   Then the reduct $\reduct{\set{\appl\rl}}\lits \ne \sete$
%   (by Lemma \ref{l_abductive_support_fires} and
%   Remark \ref{r_empty_reduct_unless_fires}).
% \end{remark}
%
% \begin{example}
%   Consider an inconsistent \ap{} program $\p$: $$\lit \when \keynot \lit.$$
%   Let $\lits = \set\lit$.
%   Notice $\lits$ is a minimal context which satisfies $\p$.
%   However, $\lits$ is not an answer set of $\p$,
%   because $\lits$ nonminimally satisfies $\reduct\p\lits = \sete$.
%   Also note that $\lits' = \sete$ satisfies neither $\p$ nor
%   $\reduct\p{\lits'} = \set{\lit.}$.
% \end{example}
%
% \begin{example}
%   Consider a consistent \cp{} program $\p$:
%   \begin{gather*}
%     \when \keynot \lit_1, \keynot \lit_2. \\
%     \lit_1 \whencr \keynot \lit_2. \tag{$\rl$}
%   \end{gather*}
%   Then $\rls = \set\rl$ is an abductive support of $\p$.
%   Indeed, let $\pappld\rls$, then $\lits = \set{\lit_1}$ is an answer set of
%   $\reduct{(\pappl\rls)}\lits = \set{\lit_1}$.
%   Notice $\lits' = \set{\lit_2}$ satisfies $\pappl\rls$,
%   but $\lit_1 \notin \lits'$.
% \end{example}

\begin{lemma}[Satisfying Context Intersection]
\label{l_intersection_satisfies}
  Let $\p$ be a nondisjunctive \dnf{} \ap{} program.
  If contexts $\lits_1$ and $\lits_2$ satisfy $\p$,
  then context $\lits_0 = \lits_1 \seti \lits_2$ also satisfies $\p$.
\end{lemma}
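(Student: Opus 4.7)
The plan is to verify satisfaction rule by rule. Since $\p$ is both nondisjunctive and \dnf{}, every rule $\rl \in \p$ has the shape $\lit \when \lit_1, \ldots, \lit_m$: a single head literal and a body of plain literals (no $\keynot$). I would fix such a rule arbitrarily and split on whether $\lits_0$ satisfies its body.

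If $\lits_0$ does not satisfy the body of $\rl$, then $\lits_0$ satisfies $\rl$ vacuously. Otherwise, every body literal $\lit_i$ lies in $\lits_0 = \lits_1 \seti \lits_2$, hence in both $\lits_1$ and $\lits_2$. So $\rl$ fires wrt each of $\lits_1$ and $\lits_2$, and since both contexts satisfy $\p$, we must have $\lit \in \lits_1$ and $\lit \in \lits_2$. Hence $\lit \in \lits_1 \seti \lits_2 = \lits_0$, so $\lits_0$ satisfies the head of $\rl$. Since $\rl$ was arbitrary, $\lits_0$ satisfies $\p$.

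Both syntactic hypotheses are essential here and explain why the claim would fail in general. The \dnf{} restriction eliminates extended literals of the form $\keynot \lit$, whose satisfaction is antimonotone in the context and so is not preserved under intersection. The nondisjunctive restriction ensures that when $\lits_1$ and $\lits_2$ each satisfy the head, they must do so via the same witness $\lit$; with a disjunctive head, they could witness satisfaction via different disjuncts and leave $\lits_0$ satisfying no disjunct at all. Beyond keeping track of these two hypotheses, the argument is a direct unfolding of the definitions of \emph{fires} and \emph{satisfies}, so I do not expect any real obstacle.
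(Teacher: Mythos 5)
Your proof is correct and follows essentially the same route as the paper's: case-split on whether the rule fires wrt the intersection, use the \dnf{} hypothesis to lift firing to both supersets, and use nondisjunctivity to force the single head literal into both contexts and hence into their intersection. The added remarks on why each syntactic hypothesis is needed are accurate but not part of the argument itself.
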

\begin{proof}
  Let $\rl$ be a rule in $\p$.
  If $\rl$ does not fire wrt $\lits_0$,
  then $\rl$ is vacuously satisfied by $\lits_0$.
  Assume $\rl$ fires wrt $\lits_0$.
  Then $\rl$ also fires wrt the supersets $\lits_1$ and $\lits_2$
  (as $\rl \in \p$ is \dnf).
  So $\lits_1$ and $\lits_2$ satisfy $\head\rl = \set\lit$
  for some literal $\lit$ (recall $\rl \in \p$ is nondisjunctive).
  Thus $\lit \in \lits_1$ and $\lit \in \lits_2$.
  Hence $\lit \in \lits_0$.
  Therefore $\lits_0$ satisfies $\rl$.
\end{proof}

The following result was obtained by \citeN{b_personal}.

\begin{lemma}[Same-Head Rule Removal \& Answer Set]
\label{l_answer_set_subprogram}
  Let $\p$ be a nondisjunctive \dnf{} \ap{} program.
  Assume $\p$ has rules $\rl_1 \ne \rl_2$ such that
  $\head{\rl_1} = \head{\rl_2}$.
  Let $\p_0 = \p \setd \set{\rl_1, \rl_2}$,
  $\p_1 = \p_0 \setu \set{\rl_1}$, and $\p_2 = \p_0 \setu \set{\rl_2}$.
  If $\lits$ is an answer set of $\p$,
  then $\lits$ is also an answer set of either $\p_1$ or $\p_2$.
\end{lemma}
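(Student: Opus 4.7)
The plan is to argue by contradiction: suppose $\lits$ is not an answer set of $\p_1$ and not of $\p_2$. Since $\p_1, \p_2 \sets \p$ and $\lits$ satisfies $\p$, $\lits$ satisfies both subprograms, so the failure must be at the minimality requirement in the definition of answer set (for \dnf{} programs). Hence there exist proper subsets $\lits_1 \setsp \lits$ satisfying $\p_1$ and $\lits_2 \setsp \lits$ satisfying $\p_2$. The aim is to build from them a proper subset of $\lits$ that satisfies all of $\p$, contradicting the minimality of $\lits$ as an answer set of $\p$.

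First I would pin down the head literal. Since $\p$ is nondisjunctive, write $\head{\rl_1} = \head{\rl_2} = \set\lit$ for some $\lit$. Because $\lits_1 \setsp \lits$, $\lits_1$ cannot satisfy $\p$ (else $\lits$ would not be minimal), yet $\lits_1$ does satisfy $\p_1 = \p \setd \set{\rl_2}$; therefore $\lits_1$ fails to satisfy exactly $\rl_2$, meaning $\rl_2$ fires wrt $\lits_1$ and $\lit \notin \lits_1$. Symmetrically, $\lit \notin \lits_2$.

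Next I would intersect: let $\lits_0 = \lits_1 \seti \lits_2$. Since $\lits_1, \lits_2$ both satisfy the nondisjunctive \dnf{} subprogram $\p_0$, Lemma \ref{l_intersection_satisfies} gives that $\lits_0$ also satisfies $\p_0$. For the two leftover rules $\rl_1, \rl_2$, observe that if $\rl_1$ were to fire wrt $\lits_0$, then since $\rl_1$ is \dnf{} and $\lits_0 \sets \lits_1$, it would also fire wrt $\lits_1$; but $\lits_1$ satisfies $\rl_1$, forcing $\lit \in \lits_1$, contradicting $\lit \notin \lits_1$. So $\rl_1$ does not fire wrt $\lits_0$ and is vacuously satisfied; the same argument handles $\rl_2$. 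Thus $\lits_0$ satisfies $\p$, and $\lits_0 \sets \lits_1 \setsp \lits$ contradicts the minimality of $\lits$.

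The main obstacle I anticipate is the step that rules out $\rl_1$ and $\rl_2$ firing wrt $\lits_0$; it is crucial that \dnf{} (no negation as failure) makes rule firing monotone under supersets, so that firing wrt $\lits_0$ propagates up to $\lits_1$ (or $\lits_2$), after which the specific failure information $\lit \notin \lits_1$ (derived from the particular choice $\lits_1$ satisfying $\p_1$ but not $\rl_2$) closes the contradiction. The rest is largely bookkeeping with the definitions.
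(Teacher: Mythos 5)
Your proposal is correct and follows essentially the same route as the paper: take proper subsets $\lits_1 \setsp \lits$ and $\lits_2 \setsp \lits$ satisfying $\p_1$ and $\p_2$, intersect them using Lemma \ref{l_intersection_satisfies}, and contradict the minimality of $\lits$ as an answer set of the \dnf{} program $\p$. The only difference is organizational: by first deriving $\lit \notin \lits_1$ and $\lit \notin \lits_2$ from the fact that $\lits_1$, $\lits_2$ cannot satisfy all of $\p$, you rule out the firing of $\rl_1$ and $\rl_2$ wrt the intersection in one stroke, whereas the paper handles this with an explicit two-case split (either some rule fires wrt its subset, yielding a subset satisfying $\p$ directly, or neither fires and the intersection argument applies).
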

\begin{proof}
  To the contrary, assume $\lits$ is an answer set of neither $\p_1$
  nor $\p_2$.
  Still, $\lits$ satisfies both $\p_1$ and $\p_2$
  (as $\lits$ satisfies their superset $\p$).
  So there exist two proper subsets of $\lits$, say $\lits_1$ and $\lits_2$,
  which respectively satisfy $\p_1$ and $\p_2$ (the programs are \dnf).
  \begin{enumerate}
    \item Case 1 of 2: either $\rl_1$ fires wrt $\lits_1$, or
      $\rl_2$ fires wrt $\lits_2$.
      \WOLOG, assume the former.
      Then $\lits_1$ satisfies $\head{\rl_1} = \head{\rl_2}$.
      So $\lits_1$ also satisfies both the rule $\rl_2$ and
      the program $\p = \p_1 \setu \set{\rl_2}$.
      As an answer set, $\lits$ minimally satisfies $\p$ (\dnf).
      But $\lits_1 \setsp \lits$, contradiction.
    \item Case 2 of 2: neither $\rl_1$ fires wrt $\lits_1$,
      nor $\rl_2$ fires wrt $\lits_2$.
      So neither $\rl_1$ nor $\rl_2$ fires wrt
      $\lits_0 = \lits_1 \seti \lits_2$ (the rules are \dnf).
      Then $\lits_0$ vacuously satisfies $\rl_1$ and $\rl_2$.
      Notice $\lits_1$ and $\lits_2$ satisfy $\p_0$
      (subset of $\p_1$ and $\p_2$),
      then $\lits_0$ satisfies $\p_0$ too
      (by Lemma \ref{l_intersection_satisfies}).
      Therefore, $\lits_0$ satisfies $\p = \p_0 \setu \set{\rl_1, \rl_2}$.
      But $\lits$ is an answer set of $\p$, and $\lits_0 \setsp \lits$,
      contradiction.
  \end{enumerate}
\end{proof}

\begin{lemma}[CR-Literal Determining CR-Rule]
\label{l_diff_cr_literals}
  Let $\p$ be a nondisjunctive \cp{} program with some abductive support $\rls$.
  For all cr-rules $\rl_1$ and $\rl_2$ in $\rls$:
  if $\head{\rl_1} = \head{\rl_2}$, then $\rl_1 = \rl_2$.
\end{lemma}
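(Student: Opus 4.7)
The plan is to prove this by contradiction, leveraging the minimality of the abductive support $\rls$. Suppose $\rl_1 \ne \rl_2$ are cr-rules in $\rls$ with $\head{\rl_1} = \head{\rl_2}$. I will exhibit a proper subset $\rls' \setsp \rls$ (obtained by removing one of $\rl_1$ or $\rl_2$) such that $\applp{\rls'}$ is still consistent, which violates the cardinality-minimality requirement of Definition of abductive support.

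Since $\applp{\rls}$ is consistent, it has some answer set $\lits$; by definition $\lits$ is an answer set of the reduct $\p^* = \reduct{\applp{\rls}}{\lits}$, which is a nondisjunctive \dnf{} \ap{} program (since $\p$ is nondisjunctive and $\appl{}$ preserves this). I would then split on how $\appl{\rl_1}$ and $\appl{\rl_2}$ fare in this reduct. If at least one of them, say $\appl{\rl_i}$, is deleted when forming $\p^*$ (because some $\keynot \lit$ with $\lit \in \lits$ sits in its body), then $\reduct{\applp{\rls \setd \set{\rl_i}}}{\lits}$ equals $\p^*$ exactly, so $\lits$ remains an answer set of $\applp{\rls \setd \set{\rl_i}}$, giving the desired contradiction.

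Otherwise both applied rules survive into $\p^*$, and their heads there are still $\head{\rl_1} = \head{\rl_2}$. If the two resulting \dnf{} rules are identical in $\p^*$ (which can happen when $\appl{\rl_1}$ and $\appl{\rl_2}$ differ only in their $\keynot$-premises that all get dropped), then again $\reduct{\applp{\rls \setd \set{\rl_2}}}{\lits} = \p^*$ and $\lits$ is an answer set of $\applp{\rls \setd \set{\rl_2}}$. If the two reduced rules are distinct, I invoke Lemma \ref{l_answer_set_subprogram} on $\p^*$ (which is nondisjunctive \dnf, as the lemma requires) to conclude that $\lits$ is an answer set of $\p^*$ minus one of the two reduced rules; that smaller program equals $\reduct{\applp{\rls \setd \set{\rl_i}}}{\lits}$ for the corresponding $i$, so $\lits$ is an answer set of $\applp{\rls \setd \set{\rl_i}}$ and minimality is again violated.

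The main obstacle I anticipate is bookkeeping around the reduct: verifying cleanly that deleting a cr-rule from $\rls$ commutes with forming the $\lits$-reduct in each case, especially the subtle collapse case where two distinct applied cr-rules reduce to the same \dnf{} rule. Once that is set up, Lemma \ref{l_answer_set_subprogram} does the real work in the only genuinely interesting subcase.
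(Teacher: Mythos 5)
Your proposal is correct and follows essentially the same route as the paper: assume two distinct same-head cr-rules in $\rls$, pass to the reduct of $\applp{\rls}$ wrt an answer set $\lits$, invoke Lemma \ref{l_answer_set_subprogram} to show $\lits$ survives as an answer set after dropping one of the applied rules, and contradict cardinality-minimality of $\rls$. Your explicit case analysis of how the two applied rules fare under the reduct (one deleted, both collapsing to the same \dnf{} rule, or remaining distinct) is bookkeeping the paper leaves implicit, and it is a welcome tightening rather than a divergence.
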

\begin{proof}
  By way of contradiction, assume there exist cr-rules $\rl_1\ne \rl_2$
  in $\rls$ where $\head{\rl_1} = \head{\rl_2}$.
  Let: $\rls_1 = \rls \setd \set{\rl_2}$
    \& $\rls_2 = \rls \setd \set{\rl_1}$ be sets of cr-rules;
  $\p_1 = \applp{\rls_1}$ \& $\p_2 = \applp{\rls_2}$ be \ap{} programs;
  $\lits$ be an answer set of $\applp{\rls}$;
  and $\p_a = \reduct{(\p_1)}\lits$ \& $\p_b = \reduct{(\p_2)}\lits$
    be (\dnf) reducts.
  By Lemma \ref{l_answer_set_subprogram}, $\lits$ is an answer set of
  either $\p_a$ or $\p_b$.
  \WOLOG, assume the former.
  Then $\lits$ is an answer set of $\p_1$.
  So $\rls_1$ is another abductive support of $\p$.
  But $\card{\rls_1} < \card{\rls}$
  (recall $\rls_1 = \rls \setd \set{\rl_2}$),
  violating the minimality of abductive support $\rls$.
\end{proof}

\begin{lemma}[CR-Literal only Supported by CR-Rule Application]
\label{l_sole_supporting_rule}
  Let $\p$ be an acyclic \cp{} program having an answer set $\lits$
  with a corresponding abductive support $\rls$.
  Let cr-rule $\rl \in \rls$ where $\head\rl = \set\lit$
  for some literal $\lit$.
  Then $\appl\rl$ is the only rule in $\pappld\rls$ which
  supports $\lit$ wrt $\lits$.
\end{lemma}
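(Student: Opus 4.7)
The plan is to use the cardinality-minimality of the abductive support $\rls$ twice: once to force $\appl\rl$ to be a supporter of $\lit$ wrt $\lits$, and once to rule out any other supporter. The central tool is Remark \ref{r_acyclic_answer_set}: for an acyclic \ap{} program, a context is an answer set iff it satisfies the program and every one of its literals has a supporting rule in the program. Note that $\applp{\rls \setd \set\rl}$ is also acyclic, since its dependency graph is a subgraph of $\g\p$.

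For the first half, I would suppose toward contradiction that $\appl\rl$ does not fire wrt $\lits$. Since $\head{\appl\rl} = \set\lit$, the only literal $\appl\rl$ could ever support is $\lit$; but not firing, $\appl\rl$ supports nothing at all. Applying Remark \ref{r_acyclic_answer_set} to the answer set $\lits$ of $\applp\rls$, every $\lit^* \in \lits$ has a supporter in $\applp\rls$, and none of these supporters is $\appl\rl$; thus every such supporter also lies in $\applp{\rls \setd \set\rl}$. Since $\lits$ clearly still satisfies this smaller program, Remark \ref{r_acyclic_answer_set} gives that $\lits$ is an answer set of $\applp{\rls \setd \set\rl}$, making that program consistent. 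But $\card{\rls \setd \set\rl} < \card{\rls}$ contradicts the minimality of $\rls$. Hence $\appl\rl$ fires wrt $\lits$, which forces $\lit \in \lits$ (as $\lits$ satisfies $\appl\rl$), so $\appl\rl$ supports $\lit$ wrt $\lits$.

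For the second half, suppose some $\rl' \ne \appl\rl$ in $\applp\rls$ also supports $\lit$ wrt $\lits$. Then $\rl'$ is either a rule of $\preg$ or the application of a cr-rule in $\rls \setd \set\rl$, so $\rl' \in \applp{\rls \setd \set\rl}$. As in the previous step, any supporter in $\applp\rls$ of a literal $\lit^* \ne \lit$ cannot be $\appl\rl$ (since $\head{\appl\rl} = \set\lit$), and now $\lit$ itself is supported by $\rl'$ in $\applp{\rls \setd \set\rl}$; so every literal in $\lits$ has a supporter in the smaller program. By Remark \ref{r_acyclic_answer_set}, $\lits$ is an answer set of $\applp{\rls \setd \set\rl}$, again contradicting the minimality of $\rls$.

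The main obstacle is the bookkeeping around the degenerate case where $\appl\rl$ happens to coincide with a rule already present in $\preg$, or with the application of some other cr-rule in $\rls$. In such a situation $\applp{\rls \setd \set\rl}$ still literally contains $\appl\rl$ and in fact equals $\applp\rls$, so $\lits$ is trivially an answer set of it and the contradiction with cardinality-minimality of $\rls$ becomes immediate — no further work is needed. Outside this edge case, the argument is a clean double application of Remark \ref{r_acyclic_answer_set}.
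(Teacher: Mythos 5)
Your proof is correct and follows essentially the same route as the paper's: assume another supporter $\rl' \ne \appl\rl$ of $\lit$ exists, apply Remark \ref{r_acyclic_answer_set} twice to show $\lits$ remains an answer set of $\applp{\rls \setd \set\rl}$, and contradict the cardinality-minimality of $\rls$. Your first half, establishing that $\appl\rl$ itself supports $\lit$ wrt $\lits$, addresses a part of the statement the paper's proof leaves implicit (it argues only uniqueness), but this is an addition rather than a different method.
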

\begin{proof}
  By way of contradiction, assume $\lit$ is also supported by
  a rule $\rl' \ne \appl\rl$ in $\pappl\rls$.
  Let $\rls' = \rls \setd \set{\rl}$ and $\p' = \applp{\rls'}$.
  We will prove $\lits$ is an answer set of $\p'$:
  \begin{enumerate}
    \item First, $\lits$ satisfies $\p' \sets \pappl\rls$.
    \item Next, let $\lit'$ be an arbitrary literal in $\lits$;
      we shall show $\lit'$ is supported wrt $\lits$ by some rule in $\p'$.
      Recall that $\lits$ is an answer set of $\pappl\rls$.
      Applying Remark \ref{r_acyclic_answer_set} to $\pappl\rls$,
      we deduce that $\lit'$ is supported wrt $\lits$ by some rule $\rl_0$
      in $\pappl\rls$.
      \begin{enumerate}
        \item Case 1 of 2: $\rl_0 = \appl\rl$.
          Recall $\head\rl = \set\lit$.
          Then $\lit = \lit'$.
          Notice $\rl'$ also supports $\lit' = \lit$ wrt $\lits$, and
          $\rl' \in \p'$.
        \item Case 2 of 2: $\rl_0 \ne \appl\rl$.
          Then $\rl_0 \in \p'$ by construction.
      \end{enumerate}
      In both cases, $\lit'$ is supported by some rule in $\p'$ wrt $\lits$.
  \end{enumerate}
  Now, applying Remark \ref{r_acyclic_answer_set} to $\p'$,
  we deduce that $\lits$ is an answer set of $\p'$.
  So $\p'$ is consistent, and $\rls'$ is an abductive support of $\p$.
  But $\card{\rls'} < \card{\rls}$,
  contradicting the minimality of abductive support $\rls$.
\end{proof}

Sometimes, only the head of a rule matters semantically (but not its body),
  and we can turn it into a fact for syntactic simplicity.

\begin{definition}[Factified Rule]
  For a regular rule $\rl$, let $\factify\rl$ denote
  the \textdef{factified rule} obtained from $\rl$ by dropping the body
  of $\rl$.
  If $\rls$ is a set of rules,
  define $\factify\rls = \set{\factify\rl : \rl \in \rls}$.
\end{definition}

\begin{lemma}[Factified Abductive Support Application \& Answer Set]
\label{l_factified_answer_set}
  Let $\p$ be a \cp{} program with some answer set $\lits$ and
  a corresponding abductive support $\rls$.
  Then $\lits$ is also an answer set of the \ap{} program
  $\p' = \preg \setu \factify{\appl\rls}$.
\end{lemma}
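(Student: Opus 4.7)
The plan is to reduce the claim to the two defining properties of an answer set for the reduct $\reduct{\p'}{\lits}$. Since factified rules have empty bodies, reduction leaves them unchanged, so $\reduct{\p'}{\lits} = \reduct{\preg}{\lits} \setu \factify{\appl\rls}$. I would then show that $\lits$ satisfies $\reduct{\p'}{\lits}$ and that no proper subset of $\lits$ does.

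For the satisfaction part, the $\preg$ piece is immediate because $\preg \sets \applp\rls$ and $\lits$ satisfies the latter. The substantive step is to prove that for every cr-rule $\rl \in \rls$, the cr-literal $\head\rl$ belongs to $\lits$, so that $\lits$ satisfies $\factify{\appl\rls}$. I would argue this by contradiction using the minimality of $\rls$: if some $\rl \in \rls$ has $\head\rl = \set{\lit_0}$ with $\lit_0 \notin \lits$, then because $\lits$ satisfies $\appl\rl$, the body of $\appl\rl$ does not fire wrt $\lits$. I would then show that $\lits$ remains an answer set of $\applp{\rls \setd \set\rl}$ by examining its reduct, which differs from $\reduct{(\applp\rls)}{\lits}$ by at most the single rule obtained by reducing $\appl\rl$; this remnant either vanishes (if some $\keynot\lit'$ with $\lit' \in \lits$ eliminates the rule) or has a positive body unsatisfied by $\lits$, making it vacuous for every subset of $\lits$ and hence safely droppable. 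Either way, dropping $\rl$ from $\rls$ yields a smaller abductive support, contradicting the minimality of $\rls$.

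For the minimality part, I would suppose a proper subset $\lits_0 \setsp \lits$ satisfies $\reduct{\p'}{\lits}$. Then $\lits_0$ must contain every cr-literal of $\rls$ in order to meet the facts in $\factify{\appl\rls}$. Consequently, each rule in $\reduct{\appl\rls}{\lits}$ has its head (a single cr-literal, now already in $\lits_0$) satisfied, so $\lits_0$ satisfies $\reduct{\appl\rls}{\lits}$. Combined with $\lits_0$ satisfying $\reduct{\preg}{\lits}$, this makes $\lits_0$ satisfy $\reduct{(\applp\rls)}{\lits}$, contradicting the minimality of $\lits$ as an answer set of $\applp\rls$.

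The main obstacle will be the satisfaction sub-argument: showing that each cr-rule's head literal lies in $\lits$ by invoking the minimality of the abductive support. The delicate part is the case analysis on the reduct of $\appl\rl$, where I must verify that the minimality of $\lits$ survives the removal of $\appl\rl$; the non-trivial case requires the observation that a rule whose positive body is not satisfied by $\lits$ cannot be fired by any subset of $\lits$ either, so removing it from the reduct cannot accidentally admit a smaller satisfying context.
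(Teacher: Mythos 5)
Your proposal is correct and takes essentially the same route as the paper: show $\lits$ satisfies $\reduct{(\p')}\lits = \reduct{(\preg)}{\lits} \setu \factify{\appl\rls}$, then show no proper subset does, because such a subset would have to contain the facts' heads and would therefore satisfy $\reduct{(\pappl\rls)}\lits$, contradicting the minimality of $\lits$. The only difference is that you explicitly prove $\head\rls \sets \lits$ by a removal argument against the cardinality-minimality of $\rls$, a claim the paper's proof simply asserts; your added justification is sound.
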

\begin{proof}
  We prove $\lits$ is a minimal context which satisfies the reduct
  $\reduct{(\p')}\lits$:
  \begin{enumerate}
    \item Let \ap{} program $\pappld\rls$.
      Recall $\lits$ is an answer set of $\pappl\rls$
      and thus satisfies the reduct $\reduct{(\pappl\rls)}\lits =
      \reduct{(\preg)}\lits \setu \reduct{(\appl\rls)}\lits$.
      Since $\rls$ is an abductive support for answer set $\lits$,
      we know $\head\rls \sets \lits$.
      Notice $\head{\factify{\appl{\rls}}} = \head{\appl{\rls}} = \head\rls$.
      Then $\lits$ satisfies
      $\reduct{(\p')}\lits = \reduct{(\preg)}{\lits} \setu \factify{\appl\rls}$.
    \item Assume some context $\lits' \sets \lits$ also satisfies
      $\reduct{(\p')}\lits$.
      Since $\factify{\appl{\rls}}$ contains only facts,
      we know $\head{\appl{\rls}} = \head{\factify{\appl{\rls}}} \sets \lits'$.
      Then $\lits'$ satisfies $\reduct{(\pappl\rls)}\lits$.
      Recall $\lits$ minimally satisfies $\reduct{(\pappl\rls)}{\lits}$,
      as $\lits$ is an answer set of $\p$.
      So $\lits \sets \lits'$.
      Therefore $\lits' = \lits$.
  \end{enumerate}
\end{proof}

% The next lemma says the semantic difference between two cr-rules
%   is in their heads.

\begin{lemma}[Same-Head Abductive Supports \& Answer Set Inclusion/Equality]
\label{l_same_abductive_support_heads}
  Let $\p$ be a \cp{} program with answer sets $\lits_1 \sets \lits_2$ and
  corresponding abductive supports $\rls_1$ \& $\rls_2$.
  If $\head{\rls_1} = \head{\rls_2}$, then $\lits_1 = \lits_2$.
\end{lemma}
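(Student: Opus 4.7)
My plan is to reduce this claim to the already-established antichain property of \ap{} programs (Proposition \ref{pr_antichain_a_prolog}) by using the factification lemma (Lemma \ref{l_factified_answer_set}) to put both answer sets on a common \ap{} footing.

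First, I would apply Lemma \ref{l_factified_answer_set} twice: $\lits_1$ is an answer set of the \ap{} program $\p'_1 = \preg \setu \factify{\appl{\rls_1}}$, and $\lits_2$ is an answer set of $\p'_2 = \preg \setu \factify{\appl{\rls_2}}$. Each of these factified programs ``forgets'' the bodies of the applied cr-rules and keeps only their head literals as facts.

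Second, I would argue that $\p'_1 = \p'_2$. The key observation is that each cr-rule $\rl$ has a singleton head $\set{\lit_0}$, and its factification $\factify{\appl\rl}$ is the fact ``$\lit_0 \when .$'', which is determined entirely by the head literal. Consequently $\factify{\appl{\rls_i}}$ is set-theoretically equal to $\set{\lit \when . : \lit \in \head{\rls_i}}$ (duplicate facts collapse in the set union). The hypothesis $\head{\rls_1} = \head{\rls_2}$ therefore gives $\factify{\appl{\rls_1}} = \factify{\appl{\rls_2}}$, and hence $\p'_1 = \p'_2$. Call this common \ap{} program $\p'$.

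Finally, both $\lits_1$ and $\lits_2$ are answer sets of the single \ap{} program $\p'$, with $\lits_1 \sets \lits_2$ by hypothesis. Proposition \ref{pr_antichain_a_prolog} then yields $\lits_1 = \lits_2$ immediately. I do not anticipate a real obstacle: the only delicate point is the bookkeeping in step two, verifying that factification really does depend on nothing but head literals and that the two factified sets genuinely coincide as sets of rules rather than merely as multisets of heads.
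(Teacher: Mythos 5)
Your proof is correct and follows essentially the same route as the paper's: apply Lemma \ref{l_factified_answer_set} to both answer sets, observe that the hypothesis $\head{\rls_1} = \head{\rls_2}$ makes the two factified programs identical, and conclude with Proposition \ref{pr_antichain_a_prolog}. Your extra care in step two (checking that factified cr-rules are determined solely by their head literals) is a sound elaboration of a point the paper states without justification.
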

\begin{proof}
  By Lemma \ref{l_factified_answer_set},
  $\lits_1$ and $\lits_2$ are respectively
  answer sets of $\preg \setu \factify{\appl{\rls_1}}$ and
  $\preg \setu \factify{\appl{\rls_2}}$,
  which are the same \ap{} program because $\head{\rls_1} = \head{\rls_2}$.
  By Proposition \ref{pr_antichain_a_prolog}, since $\lits_1 \sets \lits_2$,
  we have $\lits_1 = \lits_2$.
\end{proof}

%NOTE non-syntactic condition: same-head abductive supports
% \begin{proposition}
%   Let $\p$ be a \cp{} program.
%   Then $\p$ has antichain property if $\head{\rls_1} = \head{\rls_2}$
%   for all abductive supports $\rls_1$ and $\rls_2$ of $\p$.
% \end{proposition}
% \begin{proof}
%   Let $\lits_1$ and $\lits_2$ be arbitrary answer sets of $\p$
%   corresponding to some abductive supports $\rls_1$ and $\rls_2$,
%   respectively.
%   By Lemma \ref{l_same_abductive_support_heads},
%   $\lits_1 \sets \lits_2 \impl \lits_1 = \lits_2$.
% \end{proof}

%%%%%%%%%%%%%%%%%%%%%%%%%%%%%%%%%%%%%%%%%%%%%%%%%%%%%%%%%%%%%%%%%%%%%%%%%%%%%%%%
\subsection{Antichain Sufficient Condition: Acyclicity \& CR-Independence}

Next, we explore some concepts related to proofs of literals,
  which were introduced in \citeN{b_hcf}.
  Then we will be ready to prove
  the primary result of the paper: Theorem \ref{th_last}.

\begin{definition}[Proof of Literal]
\label{d_lit_proof}
  Let $\p$ be an \ap{} program, $\lits$ be a context, and $\lit$ be a literal.
  A \textdef{proof} of $\lit$ wrt $\lits$ in $\p$ is a nonempty sequence
  $\pr = \rlseq{1}{n}$ of rules in $\p$ such that:
  \begin{enumerate}
    \item the head of each rule $\rl_i$ has a literal supported by $\rl_i$
      wrt $\lits$; call this sole literal $\h{\lits}{\rl_i}$
    \item $\lit = \h{\lits}{\rl_n}$
    \item $\posbody{\rl_1} = \sete$ \label{item_first_lit_proof_rule}
      % ($\rl_1$ need not be a fact; see Note \ref{n_first_lit_proof_rule})
    \item for every rule $\rl_i$, each literal in $\posbody{\rl_i}$
      is $\h{\lits}{\rl_j}$ for some $j < i$
  \end{enumerate}
\end{definition}
In this definition, there is a caveat
  on criterion \eqref{item_first_lit_proof_rule}.
  Details follow.

\begin{note}[Non-Fact as First Rule in Proof of Literal]
\label{n_first_lit_proof_rule}
  In the original definition of proofs of literals,
  the first rule $\rl_1$ must be a fact \cite[page 57]{b_hcf}.
  However, that seems to be too strong.
  For instance, consider a \hcf{} \ap{} program $\p$ containing a sole rule:
  \begin{gather*}
    \lit \when \keynot b. \tag{$\rl_1$}
  \end{gather*}
  The only answer set is $\lits = \set{\lit}$.
  Now, every literal in an answer set of a \hcf{} program has a proof
  \cite[Lemma B.5, page 83]{b_hcf}.
  So $\lit$ has a proof wrt $\lits$ in $\p$.
  The only candidate for such a proof is $\pr = \seq{\rl_1}$.
  But $\rl_1$ is not a fact, so there is no proof of $\lit$
  according to the original definition, contradiction.
  In the adjusted Definition \ref{d_lit_proof}, $\pr$ is a proof of $\lit$,
  since $\posbody{\rl_1} = \sete$.
  Additionally, all original results in \citeN{b_hcf} seem to still hold
  under this adjusted definition.
\end{note}

We continue with proofs of literals.
  For a proof $\pr = \rlseq1n$, let $\h{\lits}{\pr}$ denote
  $\set{\h{\lits}{\rl} : \rl \in \pr}$
  and $\posbody\pr$ denote $\set{\posbody\rl : \rl \in \pr}$.
  Also, let $\prs{\lit}{\lits}{\p}$ denote the \textdef{set of all proofs}
  of a literal $\lit$ wrt a context $\lits$ in an \ap{} program $\p$.
  A proof $\pr \in \prs{\lit}{\lits}{\p}$ is called a \textdef{minimal proof}
  if $\pr$ is shortest: there is no proof $\pr' \in \prs{\lit}{\lits}{\p}$
  where $\card{\pr'} < \card\pr$.

\begin{convention}[Distinct Rules in Proof of Literal]
  Let proof $\pr = \rlseq1n \in \prs{\lit}{\lits}{\p}$.
  As usual, each $\rl_i$ is a rule, $\lit$ is a literal, $\lits$ is a context,
  and $\p$ is an \ap{} program.
  This paper assumes that the rules in $\pr$ are pairwise distinct.
  Indeed, if there were rules $\rl_i = \rl_j$ where $i < j$,
  then $\pr' = \seq{\rllist1{j - 1}, \rllist{j + 1}n} \in \prs{\lit}{\lits}{\p}$
  would readily be a shorter proof, and $\rl_j$ would be obviously redundant.
\end{convention}

\begin{lemma}[Proofs of Literals in Answer Set]
\label{l_lit_proof_wrt_answer_set}
  If $\p$ is a \hcf{} \ap{} program with an answer set $\lits$,
  then each literal in $\lits$ has a proof wrt $\lits$ in $\p$.
\end{lemma}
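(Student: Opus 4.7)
The plan is to first reduce to the \dnf{} case via the reduct, then stratify $\lits$ into well-founded layers, and finally build proofs inductively along those layers.

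First, I would reduce to the case where $\p$ is \dnf. Because $\lits$ is an answer set of $\p$, it is also an answer set of the \dnf{} program $\reduct{\p}{\lits}$, and this reduct remains \hcf{}: its dependency graph is a subgraph of $\g\p$, so the absence of head-cycles is preserved. Moreover, any proof of a literal wrt $\lits$ in $\reduct{\p}{\lits}$ lifts to one in $\p$, since each reduct rule originates from a $\p$-rule whose default-negated premises are already satisfied by $\lits$; firing and support wrt $\lits$ transfer unchanged.

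Second, with $\p$ \dnf, I would layer $\lits$ by setting $\lits_0 = \sete$ and, for $i \ge 0$, letting $\lits_{i + 1}$ consist of $\lits_i$ together with every $\lit \in \lits$ that is supported wrt $\lits$ by some rule whose positive body lies in $\lits_i$. The key claim is $\bigcup_i \lits_i = \lits$. Supposing otherwise, let $\lits^\star = \lits \setd \bigcup_i \lits_i$ be nonempty. I would show that $\lits \setd \lits^\star$ still satisfies every rule of $\p$, contradicting the minimality of $\lits$ as a model of the \dnf{} program. For a rule $\rl$ firing wrt $\lits \setd \lits^\star$ whose head meets $\lits$, head-cycle-freeness forces a head-literal to lie outside $\lits^\star$: were every head-literal of $\rl$ in $\lits$ confined to $\lits^\star$, the mutual support required among $\lits^\star$-literals would compel a head-cycle in $\g\p$.

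Third, equipped with the stratification, I would proceed by strong induction on the least $i$ with $\lit \in \lits_i$. At level $i \ge 1$, select a supporting rule $\rl$ with $\posbody\rl \sets \lits_{i - 1}$; by induction every $\lit' \in \posbody\rl$ has a proof wrt $\lits$ in $\p$. Concatenating these proofs in an order consistent with levels and then appending $\rl$ yields a proof of $\lit$, after pruning duplicate rules per the distinctness convention. The main obstacle is the layering exhaustion argument in the second step: showing that no residue $\lits^\star$ persists in a minimal model requires a careful head-cycle analysis of any disjunctive rule whose head intersected with $\lits$ lies entirely in $\lits^\star$, and it is precisely head-cycle-freeness that rules this out.
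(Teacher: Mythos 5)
Your overall scaffolding (reduce to the \dnf{} case via the reduct, note the reduct stays \hcf{} because its dependency graph is a subgraph of $\g\p$, stratify $\lits$ into layers of supported literals, and concatenate proofs inductively) is sound, and it is worth noting that the paper does none of this: it obtains the lemma as an immediate consequence of \citeN[Theorem 2.3, page 57]{b_hcf}. So you are attempting a self-contained proof of precisely the cited result, and that is where the gap lies. The crux of your second step is the claim that if a rule $\rl$ fires wrt $M = \bigcup_i \lits_i$ while $\head\rl \seti \lits$ is nonempty and contained in the residue $\lits^\star = \lits \setd M$, then head-cycle-freeness is violated. When $\head\rl \seti \lits$ is a singleton this is fine (the rule supports that literal and its positive body lies in some layer, so the literal is not in $\lits^\star$ after all), but when $\head\rl \seti \lits$ contains two or more literals, all in $\lits^\star$, the rule supports nothing, and your appeal to ``mutual support compelling a head-cycle'' is asserted rather than proved.

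The natural argument available at that point does not deliver a head-cycle. From minimality of $\lits$ one gets, for each $\lit \in \lits^\star$, a rule that supports $\lit$ wrt $\lits$ (remove $\lit$ from $\lits$ and use that the smaller set no longer satisfies the reduct); since $\lit$ has no proof, that rule's positive body must contain another literal of $\lits^\star$. Following these edges yields a directed cycle inside $\lits^\star$ in $\g\p$ --- which already suffices for \emph{acyclic} programs (cf.\ Remark \ref{r_acyclic_answer_set}) --- but an ordinary cycle is not a \hc: you would still need two distinct literals on that cycle occurring together in the head of a single rule, and nothing in your construction produces such a rule (the supporting rules found above each have $\head\rl \seti \lits$ a singleton, and the offending disjunctive rule's head literals need not lie on the cycle you found). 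Bridging this is exactly the technical content of the theorem the paper cites, so as written your proposal defers its hardest step; either carry out that analysis (e.g., along the lines of the shifted-program equivalence for \hcf{} programs in \citeN{b_hcf}) or, as the paper does, invoke the known result directly.
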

\begin{proof}
  This lemma follows immediately from \citeN[Theorem 2.3, page 57]{b_hcf}.
\end{proof}

Intuitively, given an answer set $\lits$ of an \ap{} program,
  there may be an order on the literals of $\lits$
  that indicates which literal can be proven before another.
  The following concepts formalize this intuition.

\begin{definitions}%[Rank of Literal]
  The \textdef{rank} of a literal $\lit \in \lits$ wrt an answer set $\lits$
  in a \hcf{} \ap{} $\p$ is the postive integer $\rkt{\lit}{\lits}{\p} =
  \min\set{\card{\pr} : \pr \in \prs{\lit}{\lits}{\p}}$,
  which is the length of a minimal proof.
  Note that $\rkt{\lit}{\lits}{\p}$ is well-defined,
  since proofs $\pr$ of $\lit$ wrt $\lits$ in $\p$ exist
  due to Lemma \ref{l_lit_proof_wrt_answer_set}.
\end{definitions}

\begin{definitions}%[Ranking Function]
  The \textdef{ranking function} wrt an answer set $\lits$
  in a \hcf{} \ap{} program $\p$ is a function $\rkn : \lits \to \Z^+$
  where $\rku\lit = \rkt{\lit}{\lits}{\p}$ for each literal $\lit \in \lits$.
  Note that $\rku\lit$ is well-defined, as so is $\rkt{\lit}{\lits}{\p}$.
\end{definitions}

Now, we introduce a normal proof of a literal.
  A proof can be \quote{normal} in the sense that every literal $a$
  to be derived (from the head of a rule in the proof) has higher rank
  than each of its premise literals $b$
  (from the positive body of the same rule).
  Intuitively, $a$ will be derived after $b$.
  The following definition is inspired by \citeN[Theorem 2.8, page 59]{b_hcf}.

\begin{definition}[Normal Proof of Literal]
  Let: $\p$ be a \hcf{} \ap{} program with an answer set $\lits$;
  $\rkn$ be the ranking function wrt $\lits$ in $\p$; and
  $\pr$ be a proof of a literal $\lit \in \lits$ wrt $\lits$ in $\p$.
  We say $\pr$ is a \textdef{normal proof} if:
  for each rule $\rl \in \pr$ and each literal $\lit' \in \posbody\rl$,
  we have $\rku{\h{\lits}{\rl}} > \rku{\lit'}$.
\end{definition}

The following desirable property of normal proofs will be needed later.

\begin{remark}[Normal Subproofs within Normal Proofs]
\label{r_subproof}
  Let: $\p$ be a \hcf{} \ap{} program with an answer set $\lits$;
  $\rkn$ be the ranking function wrt $\lits$ in $\p$;
  $\lit$ be a literal in $\lits$;
  $\pr = \rlseq1n$ be a normal proof in $\prs\lit\lits\p$;
  and $\rl_i$ be a rule in $\pr$.
  Then $\pr_i = \rlseq{1}{i}$ is a normal proof of $\h{\lits}{\rl_i}$
  wrt $\lits$ in $\p$.
  We say $\pr_i$ is a \textdef{subproof} within $\pr$.
\end{remark}

Now, every minimal proof is a normal proof.
  But the next example justifies the need for normal proofs by showing that
  the \quote{subproof transformation} does not preserve minimality
  (as it does normality in the previous remark).

\begin{example}[A Nonminimal Subproof within a Minimal Proof]
  Consider this acyclic \ap{} program $\p$:
  \begin{gather}
    a \when b, c.   \label{r1} \\
    b \when c1x.    \label{r2} \\
    c \when c1x.    \label{r3} \\
    c1x \when c1y.  \label{r4} \\
    c1y \when.      \label{r5} \\
    c \when c2.     \label{r6} \\
    c2 \when.       \label{r7}
  \end{gather}
  The sole answer set of $\p$ is $\lits = \set{a, b, c, c1x, c1y, c2}$.
  The only minimal proofs of literal $a$ wrt $\lits$ in $\p$ are
  the two sequences of rules
  $\seq{\eqref{r5}, \eqref{r4}, \eqref{r3}, \eqref{r2}, \eqref{r1}}$
  and $\seq{\eqref{r5}, \eqref{r4}, \eqref{r2}, \eqref{r3}, \eqref{r1}}$.
  Within both of these proofs,
  the only subproof of $c$ is $\seq{\eqref{r5}, \eqref{r4}, \eqref{r3}}$,
  which is nonminimal.
  (The minimal proof of $c$ wrt $\lits$ in $\p$
  is $\seq{\eqref{r7}, \eqref{r6}}$.)
\end{example}

Now, the following long technical lemma basically says:
  if $\lits_1 \setsp \lits_2$
  are answer sets of \ap{} programs $\p_1$ and $\p_2$, then the proofs
  of literals in $\lits_2 \setd \lits_1$ contain rules in $\p_2 \setd \p_1$.

\begin{lemma}
[Answer Set Difference Literal Proven using Program Difference Rule]
\label{l_rule_in_P2}
  Let: $\p_1$ \& $\p_2$ be \hcf{} \ap{} programs
  with corresponding answer sets $\lits_1 \setsp \lits_2$;
  $\lit$ be a literal in $\lits_2 \setd \lits_1$;
  and $\pr = \rlseq1n$ be a normal proof in $\prs{\lit}{\lits_2}{\p_2}$.
  Then there exists a rule $\rl \in \pr$ such that $\rl \in \p_2 \setd \p_1$.
\end{lemma}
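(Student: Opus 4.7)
The plan is to argue by contradiction: assume every rule in $\pr$ lies in $\p_1 \seti \p_2$ (so no rule of $\pr$ is in $\p_2 \setd \p_1$), and derive the contradiction $\lit \in \lits_1$. The key observation is that because $\lits_1 \sets \lits_2$, any rule that fires wrt $\lits_2$ via purely \quote{shared} reasoning also fires wrt $\lits_1$, so a proof of $\lit$ in $\p_2$ that never exits $\p_1$ ought to force $\lit$ into $\lits_1$.

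First I would pinpoint the earliest index in $\pr$ at which the proof \quote{escapes} $\lits_1$. Since $\h{\lits_2}{\rl_n} = \lit \notin \lits_1$, the set $\set{i : \h{\lits_2}{\rl_i} \notin \lits_1}$ is nonempty; let $i^*$ be its minimum and put $\lit^* = \h{\lits_2}{\rl_{i^*}}$. By minimality, $\h{\lits_2}{\rl_j} \in \lits_1$ for every $j < i^*$.

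Next I would show that $\rl_{i^*}$ fires wrt $\lits_1$. By clause~4 of Definition~\ref{d_lit_proof}, each literal in $\posbody{\rl_{i^*}}$ equals $\h{\lits_2}{\rl_j}$ for some $j < i^*$, and hence lies in $\lits_1$ (the base case $i^* = 1$ is handled vacuously since $\posbody{\rl_1} = \sete$). For each negated premise $\keynot \lit'$ in the body of $\rl_{i^*}$: since $\rl_{i^*}$ fires wrt $\lits_2$, the literal $\lit'$ is absent from $\lits_2$, hence absent from $\lits_1 \sets \lits_2$ as well. So the body of $\rl_{i^*}$ is satisfied by $\lits_1$.

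Finally I would close the argument. Because $\rl_{i^*} \in \p_1$ (by the contradiction hypothesis) and $\lits_1$ satisfies $\p_1$, the head of $\rl_{i^*}$ must meet $\lits_1$. But $\head{\rl_{i^*}} \seti \lits_1 \sets \head{\rl_{i^*}} \seti \lits_2 = \set{\lit^*}$, since $\rl_{i^*}$ supports $\lit^*$ wrt $\lits_2$. Thus $\lit^* \in \lits_1$, contradicting the choice of $i^*$. I do not expect a significant obstacle here: the argument relies only on clause~4 of the proof definition and the assumption $\lits_1 \sets \lits_2$, and never invokes normality of $\pr$ or the \hcf{} property of $\p_1$ and $\p_2$; those hypotheses presumably serve downstream uses of the lemma rather than its own proof.
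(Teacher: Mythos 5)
Your proof is correct, and it takes a genuinely different route from the paper. The paper proves the lemma by induction on the rank $\rku\lit$ (wrt $\lits_2$ in $\p_2$): assuming every rule of $\pr$ lies in $\p_1 \seti \p_2$, it argues about the last rule $\rl_n$, and in the inductive step it invokes Remark \ref{r_subproof} to extract a normal subproof of a positive-body literal lying in $\lits_2 \setd \lits_1$ and applies the induction hypothesis to it; normality of $\pr$ is essential there, since the rank comparison $\rku{\lit'} < \rku\lit$ is what drives both the base case and the descent. You instead pick the least index $i^*$ at which the proof ``escapes'' $\lits_1$, observe via clause~4 of Definition \ref{d_lit_proof} that $\posbody{\rl_{i^*}} \sets \lits_1$, conclude that $\rl_{i^*}$ fires wrt $\lits_1$ (the negative premises transfer because $\lits_1 \sets \lits_2$), and then use supportedness wrt $\lits_2$ to force $\head{\rl_{i^*}} \seti \lits_1 \sets \set{\lit^*}$, contradicting $\lit^* \notin \lits_1$. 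This is more elementary and slightly more general: it needs neither the normality of $\pr$ nor the \hcf{} hypotheses (which in the statement only serve to make ``normal proof'' well-defined), so the conclusion actually holds for an arbitrary proof in $\prs{\lit}{\lits_2}{\p_2}$. What the paper's rank-based formulation buys is coherence with the machinery it develops anyway (ranks, normal proofs, subproofs) and with the downstream use in Lemma \ref{l_cr_dependence}, where minimal proofs are chosen for Remark \ref{r_minimal_dependence}; your argument would let that lemma cite this one without insisting on normality.
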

\begin{proof}
  \newcommand{\fset}{\set{\rku{\lit_0} : \lit_0 \in \lits_2 \setd \lits_1}}
  Let $\rkn$ be the ranking function wrt $\lits_2$ in $\p_2$.
  We employ induction on $\rku\lit$.
  \begin{itemize}
    \item Base step: $\rku\lit = \min\fset$.
      \begin{enumerate}
        \item To the contrary, assume: for every rule $\rl \in \pr$, we have
          $\rl \in \p_1 \seti \p_2$.
        \item Then $\rl_n \in \p_1$.
        \item Since $\pr$ is a normal proof of $\lit$,
          for each literal $\lit' \in \posbody{\rl_n}$,
          we have $\rku{\lit'} < \rku\lit = \min\fset$.
          So $\lit' \in \lits_1 \seti \lits_2$.
        \item Then $\rl_n$ fires wrt $\lits_1$
          (recall: $\rl_n$ fires wrt $\lits_2$, and $\lits_1 \setsp \lits_2$).
        \item As $\lits_1$ is an answer set of $\p_1$, we know
          $\lits_1$ satisfies $\head{\rl_n}$.
        \item Let $\lit'$ be a literal in $\head{\rl_n}$.
          \begin{enumerate}
            \item Case 1 of 2: $\lit' = \lit$.
              We have already assumed $\lit \in \lits_2 \setd \lits_1$.
            \item Case 2 of 2: $\lit' \ne \lit$.
              We have $\lit' \notin \lits_2$
              (as only $\lit$ is supported by $\rl_n$ wrt $\lits_2$ in $\p_2$),
              so $\lit' \notin \lits_1$.
          \end{enumerate}
          In both cases, $\lit' \notin \lits_1$.
          So $\lits_1$ does not satisfy $\head{\rl_n}$, contradiction.
      \end{enumerate}
    \item Inductive step: $\rku\lit \le \max\fset$.
      \begin{enumerate}
        \item Induction hypothesis: for each literal
          $\lit' \in \lits_2 \setd \lits_1$,
          let $\pr'$ be a normal proof in $\prs{\lit'}{\lits_2}{\p_2}$;
          if $\rku{\lit'} < \rku\lit$, then there exists a rule ${\rl} \in \pr'$
          such that ${\rl} \in \p_2 \setd \p_1$.
        \item To the contrary, assume: for every rule $\rl \in \pr$, we have
          $\rl \in \p_1 \seti \p_2$.
          \begin{enumerate}
            \item Case 1 of 2: there exists a literal
              $\lit' \in \posbody{\rl_n}$
              where $\lit' \in \lits_2 \setd \lits_1$.
              \begin{enumerate}
                \item Notice
                  $\rku{\lit'} < \rku\lit = \rku{\h{\lits_2}{\rl_n}}$.
                \item Choose some positive integer $m < n$ where
                  $\h{\lits_2}{\rl_m} = \lit'$.
                \item As $\pr = \rlseq1n$ is a normal proof
                  in $\prs{\lit}{\lits_2}{\p_2}$,
                  we know $\pr' = \rlseq1m$ is a normal subproof
                  in $\prs{\lit'}{\lits_2}{\p_2}$, by Remark \ref{r_subproof}.
                \item By the induction hypothesis, $\pr'$ contains
                  some rule $\rl' \in \p_2 \setd \p_1$.
                \item So $\pr$ also contains ${\rl'}$.
                \item But we assumed $\rl \in \p_1 \seti \p_2$
                  for every rule $\rl \in \pr$, contradiction.
              \end{enumerate}
            \item Case 2 of 2: for every literal $\lit' \in \posbody{\rl_n}$,
              we have $\lit' \in \lits_1 \seti \lits_2$.
              \begin{enumerate}
                \item Then the rule $\rl_n$ fires wrt $\lits_1$.
                \item By our assumption, $\rl_n \in \p_1$.
                \item As $\lits_1$ is an answer set of $\p_1$, we know
                  $\lits_1$ satisfies $\head{\rl_n}$.
                \item Let $\lit'$ be a literal in $\head{\rl_n}$.
                  \begin{enumerate}
                    \item Subcase 1 of 2: $\lit' = \lit$.
                      We have already assumed $\lit \in \lits_2 \setd \lits_1$.
                    \item Subcase 2 of 2: $\lit' \ne \lit$.
                      We know $\lit' \notin \lits_2$
                      (as only $\lit$ is supported
                      by $\rl_n$ wrt $\lits_2$ in $\p_2$),
                      so $\lit' \notin \lits_1$.
                  \end{enumerate}
                  In both subcases, $\lit' \notin \lits_1$.
                  Then $\lits_1$ does not satisfy $\head{\rl_n}$, contradiction.
              \end{enumerate}
          \end{enumerate}
      \end{enumerate}
  \end{itemize}
\end{proof}

\begin{remark}[Normal/Minimal Proof of Literal \& Dependence of Proven Literal]
\label{r_minimal_dependence}
  Let proof $\pr = \rlseq1n \in \prs{\lit}{\lits}{\p}$ for some literal $\lit$
  in an answer set $\lits$ of an \ap{} program $\p$.
  If $\pr$ is a normal proof (or more specifically, a minimal proof),
  then $\lit$ depends on $\h{\lits}{\rl_i}$ for all $i < n$.
\end{remark}

The following lemma asserts (equivalently)
  that cr-independence implies antichain property in certain cases.

\begin{lemma}[Answer Set Chain Implying CR-Dependence]
\label{l_cr_dependence}
  Let $\p$ be a nondisjunctive acyclic \cp{} program.
  If $\p$ has answer sets $\lits_1 \setsp \lits_2$,
  then there exist literals $\lit_1$ and $\lit_2$ in $\head\pcr$ such that
  $\lit_1$ depends on $\lit_2$.
\end{lemma}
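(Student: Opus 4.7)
The plan is to pick a cr-literal $\lit_1 \in \head{\rls_1} \setd \head{\rls_2}$ and then trace a normal proof of $\lit_1$ in $\p_2 = \applp{\rls_2}$ down to another cr-literal $\lit_2 \in \head{\rls_2}$. To see $\head{\rls_1} \setd \head{\rls_2}$ is nonempty: $\lits_1 \setsp \lits_2$ together with Lemma \ref{l_same_abductive_support_heads} forces $\head{\rls_1} \ne \head{\rls_2}$, while $\card{\rls_1} = \card{\rls_2}$ (abductive supports are minimal wrt cardinality) combined with Lemma \ref{l_diff_cr_literals} (each cr-rule in an abductive support is determined by its head) gives $\card{\head{\rls_1}} = \card{\head{\rls_2}}$, so neither head set contains the other. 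Pick $\lit_1 \in \head{\rls_1} \setd \head{\rls_2}$; then $\lit_1 \in \head\pcr$, $\lit_1 \in \lits_1$, and hence $\lit_1 \in \lits_2$.

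Next, since $\p_2$ is a nondisjunctive acyclic (hence \hcf) \ap{} program with answer set $\lits_2$, Lemma \ref{l_lit_proof_wrt_answer_set} supplies a normal (e.g., minimal) proof $\pr = \rlseq 1 n$ of $\lit_1$ wrt $\lits_2$ in $\p_2$. The last rule $\rl_n$ has head $\set{\lit_1}$; since $\lit_1 \notin \head{\rls_2}$, the rule $\rl_n$ is not in $\appl{\rls_2}$, so $\rl_n \in \preg \sets \p_1$. Meanwhile, Lemma \ref{l_sole_supporting_rule} applied to $\p$ with answer set $\lits_1$ and abductive support $\rls_1$ (using the cr-rule $\rl' \in \rls_1$ with $\head{\rl'} = \set{\lit_1}$) says $\appl{\rl'}$ is the only rule in $\p_1$ supporting $\lit_1$ wrt $\lits_1$. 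Thus $\rl_n \ne \appl{\rl'}$ cannot support $\lit_1$ in $\p_1$ wrt $\lits_1$; and since $\head{\rl_n} \seti \lits_1 = \set{\lit_1}$, the only way support fails is that $\rl_n$ does not fire wrt $\lits_1$, even though $\rl_n$ fires wrt $\lits_2$ (supporting $\lit_1$ there).

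A default-negated premise of $\rl_n$ blocking firing wrt $\lits_1$ would also block firing wrt $\lits_2$ (as $\lits_1 \sets \lits_2$), contradicting $\rl_n$ firing wrt $\lits_2$; so the blocker lies in the positive body, yielding $\lit'' \in \posbody{\rl_n}$ with $\lit'' \in \lits_2 \setd \lits_1$. By the proof definition, $\lit'' = \h{\lits_2}{\rl_j}$ for some $j < n$, and the prefix $\rlseq 1 j$ is a normal subproof of $\lit''$ in $\prs{\lit''}{\lits_2}{\p_2}$ by Remark \ref{r_subproof}. Applying Lemma \ref{l_rule_in_P2} to this subproof yields a rule $\rl_k$ (with $k \le j$) in $\p_2 \setd \p_1 \sets \appl{\rls_2}$; hence $\rl_k = \appl{\rl^*}$ for some $\rl^* \in \rls_2$, and $\lit_2 := \h{\lits_2}{\rl_k} = \head{\rl^*} \in \head\pcr$. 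Finally, Remark \ref{r_minimal_dependence} applied to the normal proof $\pr$ shows $\lit_1$ depends on $\lit_2$, finishing the argument.

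The delicate step is the firing-body analysis in the third paragraph: Lemma \ref{l_sole_supporting_rule} is needed to force $\rl_n$ to misfire wrt $\lits_1$, and then the monotonicity of default negation under $\lits_1 \sets \lits_2$ localizes the misfire to the positive body, precisely providing the bridging literal $\lit''$ on which Lemma \ref{l_rule_in_P2} can be invoked to supply the second cr-literal.
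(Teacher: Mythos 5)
Your proof is correct and follows essentially the same route as the paper's: the same cardinality argument via Lemmas \ref{l_same_abductive_support_heads} and \ref{l_diff_cr_literals} to obtain $\lit_1 \in \head{\rls_1} \setd \head{\rls_2}$, the same use of Lemma \ref{l_sole_supporting_rule} to force a bridging literal of $\posbody{\rl_n}$ into $\lits_2 \setd \lits_1$, and the same appeal to Lemma \ref{l_rule_in_P2} and Remark \ref{r_minimal_dependence} to reach a second cr-literal; the only structural difference is that you work directly with $\applp{\rls_1}$ and $\applp{\rls_2}$ rather than the paper's factified programs from Lemma \ref{l_factified_answer_set}, which if anything makes your invocation of Lemma \ref{l_sole_supporting_rule} more faithful to its statement. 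One sentence is worth adding where you assert $\rl_n \ne \appl{\rl'}$: you should observe that $\appl{\rl'} \notin \preg$, since otherwise $\rls_1 \setd \set{\rl'}$ would yield the same consistent program and contradict the cardinality-minimality of $\rls_1$ (the paper's own proof glosses over the analogous point).
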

\begin{proof}
  Some notations first:
  \begin{enumerate}
    \item By the contrapositive of Lemma \ref{l_same_abductive_support_heads},
      there exist abductive supports $\rls_1$ and $\rls_2$
      (respectively corresponding to $\lits_1$ and $\lits_2$) where
      $\head{\rls_1} \ne \head{\rls_2}$.
    \item Construct two sets of facts: $\rls_1' = \factify{\appl{\rls_1}}$ and
      $\rls_2' = \factify{\appl{\rls_2}}$.
    \item Introduce \ap{} programs $\p_1 = \preg \setu \rls_1'$ and
      $\p_2 = \preg \setu \rls_2'$.
      By Lemma \ref{l_factified_answer_set},
      $\lits_1$ and $\lits_2$ are respectively answer sets of $\p_1$
      and $\p_2$.
  \end{enumerate}
  We follow these steps:
  \begin{enumerate}
    \item Note that $\p_1$ and $\p_2$ are nondisjunctive.
      By the contrapositive of Lemma \ref{l_diff_cr_literals},
      the cr-literals in $\rls_1$ are pairwise distinct.
      So are the cr-literals in $\rls_2$.
      Then $\card{\rls_1'} = \card{\rls_1}$ and
      $\card{\rls_2'} = \card{\rls_2}$.
    \item Notice $\card{\rls_1} = \card{\rls_2} > 0$.
      Then $\card{\rls_1'} = \card{\rls_2'} > 0$.
    \item Observe $\card{\head{\rls_1'}} = \card{\rls_1'}$ and
      $\card{\head{\rls_2'}} = \card{\rls_2'}$.
      Thus $\card{\head{\rls_1'}} = \card{\head{\rls_2'}} > 0$.
    \item Recall $\head{\rls_1'} = \head{\rls_1} \ne
      \head{\rls_2} = \head{\rls_2'}$.
      Then $\head{\rls_1'} \setd \head{\rls_2'} \ne \sete$.
    \item Select some literal
      $\lit_1 \in \head{\rls_1'} \setd \head{\rls_2'}$.
      Let $\rl_1$ be the fact \quote{$\lit_1 \when.$} in
      $\rls_1' \setd\rls_2'$.
    \item Since $\lits_1$ is an answer set of $\p_1$,
      we must have $\lit_1 \in \lits_1$.
      Recall $\lits_1 \setsp \lits_2$.
      Then $\lit_1 \in \lits_2$.
    \item As $\lits_2$ is an answer set of $\p_2$, there exists
      a rule $\rl \in \p_2$ which supports $\lit_1$ wrt $\lits_2$.
      Note that $\posbody{\rl} \sets \lits_2$.
      \begin{enumerate}
        \item Case 1 of 2: there exists a literal $\lit \in \posbody{\rl}$ where
          $\lit \in \lits_2 \setd \lits_1$.
          \begin{enumerate}
            \item Let $\pr$ be a minimal proof in $\prs{\lit}{\lits_2}{\p_2}$.
            \item By Lemma \ref{l_rule_in_P2}, there exists a rule
              $\rl_2 \in \pr$ where $\rl_2 \in \p_2 \setd \p_1$.
            \item Then $\rl_2 \in \rls_2' \setd \rls_1' \sets \appl\pcr$.
              Let literal $\lit_2 = \h{\lits_2}{\rl_2} \in \head\pcr$.
            \item As $\pr$ is a minimal proof, $\lit$ depends on $\lit_2$,
              by Remark \ref{r_minimal_dependence}.
            \item Recall $\lit_1$ depends on $\lit$ in $\rl$.
              By transitivity, $\lit_1$ depends on $\lit_2$.
          \end{enumerate}
        \item Case 2 of 2: $\posbody{\rl} \sets \lits_1 \setsp \lits_2$.
          We show that this case is impossible.
          \begin{enumerate}
            \item Subcase 1 of 2: $\rl \in \p_1 \seti \p_2$.
              \begin{enumerate}
                \item Recall $\rl$ supports $\lit_1$ wrt $\lits_2$.
                  Since $\posbody{\rl} \sets \lits_1 \setsp \lits_2$,
                  we know $\rl$ also supports $\lit_1$ wrt $\lits_1$.
                \item Applying Lemma \ref{l_sole_supporting_rule} to $\p_1$,
                  we have $\rl = \rl_1$.
                \item However, $\rl \in \p_2$
                  whereas $\rl_1 \in \p_1 \setd \p_2$, contradiction.
              \end{enumerate}
            \item Subcase 2 of 2: $\rl \in \p_2 \setd \p_1$.
              \begin{enumerate}
                \item So $\rl \in \rls_2' \setd \rls_1'$.
                  Then $\rl$ is the fact \quote{$\lit_1 \when.$},
                    which is exactly $\rl_1$.
                \item However, we selected $\rl_1$ from
                  $\rls_1' \setd \rls_2'$
                  while $\rl \in \rls_2'$, contradiction.
              \end{enumerate}
          \end{enumerate}
      \end{enumerate}
  \end{enumerate}
\end{proof}

\begin{lemma}[Equivalent Nondisjunctive Program]
\label{l_nondisjunctive_acyclic}
  For every acyclic cr-independent \cp{} program $\p$,
  there is a nondisjunctive acyclic cr-independent program $\p'$
  equivalent to $\p$.
\end{lemma}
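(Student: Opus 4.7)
The plan is to apply the standard shift (Ben-Eliyahu--Dechter) transformation to each disjunctive regular rule in $\preg$, leaving $\pcr$ untouched (cr-rules are already nondisjunctive by definition, since a cr-rule has a single cr-literal $\lit_0$ in its head). Concretely, I would replace every disjunctive regular rule
\[
  \lit_1 \vee \cdots \vee \lit_k \when \lit_{k+1}, \ldots, \lit_m, \keynot \lit_{m+1}, \ldots, \keynot \lit_n
\]
with the $k$ nondisjunctive rules (one for each $i \in \set{1, \ldots, k}$)
\[
  \lit_i \when \lit_{k+1}, \ldots, \lit_m,
    \keynot \lit_{m+1}, \ldots, \keynot \lit_n,
    \keynot \lit_1, \ldots, \keynot \lit_{i-1}, \keynot \lit_{i+1}, \ldots, \keynot \lit_k.
\]
Call the resulting program $\p'$. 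By construction $\p'$ is nondisjunctive.

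The key observation is that the dependency graph is unchanged by this transformation. Indeed, $\g\p$ only uses the positive bodies and the heads of rules: the shift transformation preserves each rule's positive body and introduces exactly one of the original head literals into each new rule's head, so the set of directed edges in $\g{\p'}$ equals the set in $\g\p$ (up to the same vertex set of literals appearing in the program). Consequently $\p'$ inherits acyclicity from $\p$, and since $\head{\p'_{\mathit{cr}}} = \head\pcr$, also inherits cr-independence.

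What remains is equivalence. Since $\pcr$ is unchanged, for each $\rls \sets \pcr$ the \ap{} program $\applp\rls$ of $\p'$ is exactly the shift of $\applp\rls$ of $\p$ (the applications of cr-rules themselves are nondisjunctive, so only the disjunctive regular rules in $\preg$ get shifted). Because $\p$ is acyclic, so is each $\applp\rls$, hence each such \ap{} program is \hcf. The classical Ben-Eliyahu--Dechter result \cite{b_hcf} states that shifting preserves answer sets for \hcf{} disjunctive \ap{} programs. Therefore $\applp\rls$ under $\p$ and under $\p'$ have exactly the same answer sets, which implies they have the same abductive supports and the same answer sets as \cp{} programs.

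The main obstacle is the last step: verifying that the shift transformation preserves answer sets of the relevant \ap{} reducts. Fortunately, this is a well-known property of \hcf{} programs, so I would simply invoke it rather than reprove it; the bulk of the proof is the (essentially bookkeeping) verification that the dependency graph is preserved so that acyclicity, cr-independence, and the \hcf{} hypothesis needed for the invocation all hold simultaneously.
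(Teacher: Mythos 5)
Your proposal is correct and follows essentially the same route as the paper: shift each disjunctive regular rule of $\preg$, observe that the dependency graph (hence acyclicity and cr-independence) is unchanged, and invoke the Ben-Eliyahu--Dechter shift-preservation result for \hcf{} programs to get equivalence. If anything, your equivalence step is a bit more careful than the paper's, since you apply the shift theorem to each candidate $\applp{\rls}$ (each of which is acyclic, hence \hcf) rather than only to $\preg$ and then passing to the union with $\pcr$.
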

\begin{proof}
  We will construct such a program $\p'$.
  Recall $\p = \preg \setu \pcr$,
  the union of its regular subprogram and cr-subprogram.
  Assume $\preg$ has an arbitrary rule:
  \begin{gather*}
    \regRule \tag{$\rl$}
  \end{gather*}
  We first build the nondisjunctive regular subprogram $\p_0$ of $\p'$.
  For each such rule $\rl \in \preg$,
  add the following set of $k$ rules to $\p_0$:
  \[
    \begin{rcases*}
      \lit_1  \when \conj,
        \keynot \lit_2, \keynot \lit_3, \ldots, \keynot \lit_k. \\
      \lit_2  \when \conj,
        \keynot \lit_1, \keynot \lit_3, \ldots, \keynot \lit_k. \\
      \vdots \\
      \lit_k  \when \conj,
        \keynot \lit_1, \keynot \lit_2, \ldots, \keynot \lit_{k - 1}.
    \end{rcases*}
    \tag{$R$}
  \]
  Then let $\p' = \p_0 \setu \pcr$.
  \begin{enumerate}
    \item Firstly, $\p'$ is nondisjunctive, as so are its subprograms $\p_0$
      and $\pcr$ (every cr-rule head has exactly one literal).
    \item Next, we show $\p'$ is acyclic and cr-independent.
      The only syntactic difference between $\p$ and $\p'$ is that
      $\p$ has arbitrary regular rules $\rl$
      whereas $\p'$ has corresponding collections $\rls$ of $k$ rules.
      But $\rl$ induces the same $k \cdot (m - k)$ directed edges
      as $\rls$ does.
      So the dependency graphs $\g\p = \g{\p'}$.
      Then because $\p$ is acyclic and cr-independent, so is $\p'$.
    \item Now, we prove the equivalence between $\p'$ and $\p$.
      Since $\preg$ is acyclic, it is equivalent to $\p_0$,
      by \citeN[Theorem 4.17, page 73]{b_hcf}.
      Therefore $\p = \preg \setu \pcr$ and $\p' = \p_0 \setu \pcr$
      are also equivalent.
  \end{enumerate}
\end{proof}

At last, we are ready to prove the main result of this paper.

\begin{theorem}[Antichain Property of Acyclic CR-Independent \cp{} Programs]
\label{th_last}
  If a \cp{} program $\p$ is acyclic and cr-independent,
  then $\p$ has antichain property.
\end{theorem}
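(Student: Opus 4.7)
\medskip

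\noindent\textbf{Proof plan.} My plan is to argue by contradiction, assuming that the acyclic cr-independent program $\p$ has two answer sets $\lits_1 \setsp \lits_2$ and deriving a contradiction by combining the two preceding lemmas. The strategy is to pass to an equivalent nondisjunctive program where Lemma~\ref{l_cr_dependence} applies, then transport the resulting cr-dependence back to contradict cr-independence.

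First I would invoke Lemma~\ref{l_nondisjunctive_acyclic} to obtain a nondisjunctive acyclic cr-independent \cp{} program $\p'$ equivalent to $\p$. By equivalence, the same contexts $\lits_1 \setsp \lits_2$ are answer sets of $\p'$. Since $\p'$ is nondisjunctive and acyclic, Lemma~\ref{l_cr_dependence} applies directly to $\p'$ and produces cr-literals $\lit_1, \lit_2 \in \head{\pcr'}$ such that $\lit_1$ depends on $\lit_2$ in $\p'$. This immediately contradicts the cr-independence of $\p'$, completing the proof.

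The only technical point that needs verification -- and the only place where something could go wrong -- is the bookkeeping between $\p$ and $\p'$: I need to confirm that the cr-literals of $\p'$ are cr-literals of $\p$ and that dependence in $\g{\p'}$ reflects dependence in $\g\p$. Both facts fall out of the construction in Lemma~\ref{l_nondisjunctive_acyclic}, whose proof explicitly notes that $\pcr' = \pcr$ (the cr-subprogram is untouched) and that $\g{\p'} = \g\p$ (the disjunction-elimination rewrite introduces exactly the same positive-body/head edges). Hence ``cr-independent'' really is preserved by the construction, so the contradiction is genuine. Because everything reduces to two off-the-shelf lemmas, I do not anticipate any serious obstacle; the main conceptual work was already done in Lemmas~\ref{l_cr_dependence} and~\ref{l_nondisjunctive_acyclic}.
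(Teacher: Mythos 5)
Your proposal is correct and follows essentially the same route as the paper: invoke Lemma~\ref{l_nondisjunctive_acyclic} to pass to an equivalent nondisjunctive acyclic cr-independent program $\p'$, then apply Lemma~\ref{l_cr_dependence} (the paper uses its contrapositive directly, you phrase it as a contradiction, which is the same argument). Your extra bookkeeping check is harmless but unnecessary, since the statement of Lemma~\ref{l_nondisjunctive_acyclic} already guarantees that $\p'$ is cr-independent, so the contradiction with Lemma~\ref{l_cr_dependence} is immediate.
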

\begin{proof}
  By Lemma \ref{l_nondisjunctive_acyclic}, there exists a nondisjunctive
  acyclic cr-independent program $\p'$ equivalent to $\p$.
  Now, $\p'$ has antichain property,
  by the contrapositive of Lemma \ref{l_cr_dependence}.
  Therefore, the equivalent original program $\p$ has antichain property too.
\end{proof}

%%%%%%%%%%%%%%%%%%%%%%%%%%%%%%%%%%%%%%%%%%%%%%%%%%%%%%%%%%%%%%%%%%%%%%%%%%%%%%%%

\section{Conclusion}

We have found a reasonably weak syntactic condition which guarantees that
  a \cp{} program has antichain property: acyclicity and cr-independence.
  We think most natural logic programs are acyclic and cr-independent.
  In order to induce cycles, a program would need to have circular reasoning
  in some sense, which is not very helpful for practical tasks.
  Being cr-dependent is uncommon as well.
  Given that cr-rules only apply in catastrophic situations
  (when the program would be inconsistent otherwise), a natural program
  would rarely specify that a cr-literal should also be derivable
  indirectly from another cr-literal via a longer path.

The future goal is to find weaker sufficient conditions
  to extend the class of \cp{} programs known to have antichain property.
  We thank the fourth referee for the suggestion to relax Theorem \ref{th_last}
  by: either dropping acyclicity from the premises,
  or weakening it into \hcf dom.
  So far, we have found no cyclic (with or without \hc s)
  cr-independent program that has an answer set chain.
  Maybe cr-independence alone is sufficient for antichain property.
  This is a promising future research direction.

%%%%%%%%%%%%%%%%%%%%%%%%%%%%%%%%%%%%%%%%%%%%%%%%%%%%%%%%%%%%%%%%%%%%%%%%%%%%%%%%

\section{Acknowledgment}

We sincerely appreciate the significant guidance of \mg{} at \ttu.
  This paper includes his proofs of Proposition \ref{pr_antichain_a_prolog} and
  % Lemma \ref{l_non_antichain} \& % legacy
  Lemma \ref{l_answer_set_subprogram}.
  We also thank \eb{} for the informative discussions.
  Finally, we are indebted to the four referees for their constructive feedback.

%%%%%%%%%%%%%%%%%%%%%%%%%%%%%%%%%%%%%%%%%%%%%%%%%%%%%%%%%%%%%%%%%%%%%%%%%%%%%%%%

\newpage

\bibliographystyle{acmtrans.bst}
\bibliography{crprolog.bib}

%%%%%%%%%%%%%%%%%%%%%%%%%%%%%%%%%%%%%%%%%%%%%%%%%%%%%%%%%%%%%%%%%%%%%%%%%%%%%%%%

\end{document}